\newtheorem{conjecture}[theorem]{Conjecture}
\newcommand{\sat} {\textsc{Sat}}
\newcommand{\SAT} {\textsc{Sat}}
\newcommand{\ethh} {\mathrm{ETH}}
\newcommand{\poly} {\mathrm{poly}}
\newcommand{\cG}{{\mathcal{G}}}
\newcommand{\cP}{{\mathcal{P}}}
\newcommand{\cC}{{\mathcal{C}}}
\newcommand{\cS}{{\mathcal{S}}}
\newcommand{\scsc}{\begin{math}{\cal SC}\end{math}}
\newcommand{\sclc}{\begin{math}{\cal SC_{\cG}}\end{math}}
\newcommand{\scprime}{\begin{math}{\cal SC_{\cG'}}\end{math}}
\newcommand{\setcover}{\textsc{Set-Cover}}
\newcommand{\lc} {\textsc{Label-Cover}}
\newcommand{\dst}{\textsc{Dst}}
\author{Marek Cygan}{University of Warsaw, Poland}{mcygan@mimuw.edu.pl}{}{}
\author{Magnús M. Halldórsson}{ICE-TCS \& Department of Computer Science, Reykjavik University, Iceland}{mmh@ru.is}{https://orcid.org/0000-0002-5774-8437}{}
\author{Guy Kortsarz}{Rutgers University, Camden, NJ, USA}{guyk@rutgers.edu}{}{}
\authorrunning{M.\,M. Halldórsson and G. Kortsarz}
\keywords{subexponential time algorithms, lower bounds, set cover}
\title{Tight Bounds on Subexponential Time  Approximation of Set Cover and Related Problems}
\begin{document}
\maketitle

\begin{abstract}
We show that {\setcover} on instances with $N$ elements cannot be approximated within $(1-\gamma)\ln N$-factor in time exp($N^{\gamma-\delta})$, for any $0 < \gamma < 1$ and any $\delta > 0$, assuming the Exponential Time Hypothesis. 
This essentially matches the best upper bound known by Cygan et al.\ \cite{Marek} of $(1-\gamma)\ln N$-factor in time $exp(O(N^\gamma))$.

The lower bound is obtained by extracting a standalone reduction from {\lc} to {\setcover} from the work of Moshkovitz \cite{Moshkovitz15}, and applying it to a different PCP theorem than done there. We also obtain a tighter lower bound when conditioning on the Projection Games Conjecture.

We also treat three problems (Directed Steiner Tree, Submodular Cover, and Connected Polymatroid) that strictly generalize {\setcover}.
We give a $(1-\gamma)\ln N$-approximation algorithm for these problems that runs in $exp(\tilde{O}(N^\gamma))$ time, for any $1/2 \le \gamma < 1$.
\end{abstract}

\section{Introduction}

We show that {\setcover} on instances with $N$ elements cannot be approximated within $(1-\gamma)\ln N$-factor in time exp($N^{\gamma-\delta})$, assuming the Exponential Time Hypothesis ({$\ethh$}), for any $\gamma, \delta > 0$. This essentially matches the best upper bound known by Cygan et al.\ \cite{Marek}.
This is obtained by extracting a standalone reduction from {\lc} to {\setcover} from the work of Moshkovitz \cite{Moshkovitz15}, and applying it to a different PCP theorem than done there. We also obtain a tighter lower bound when conditioning on the Projection Games Conjecture.

We also treat the Directed Steiner Tree (\dst) problem that strictly generalizes {\setcover}.
The input to {\dst} consists of a directed graph $G$ with costs on edges, a set of terminals, and a designated root $r$. The goal is to find a subgraph of $G$ that forms an arborescence rooted at $r$ containing all the $N$ terminals and minimizing the cost.
We give a $(1-\gamma)\ln N$-approximation algorithm for {\dst} that runs in $exp(\tilde{O}(N^\gamma))$ time, for any $\gamma \ge 1/2$.
Recall that the $\tilde{O}$-notation hides logarithmic factors.

This algorithm also applies to two other generalizations of {\setcover}. 
In the \textsc{Submodular Cover} problem, the input is a set system $(U,\cC$) with a cost on each element of the universe $U$. We are given a non-decreasing \emph{submodular} function $f:2^U\rightarrow \mathbb{R}$ satisfying,
for every $S\subseteq T\subseteq V$
and for every $x\in U\setminus T$, $f(S+x)-f(S)\geq f(T+x)-f(T)$.
The objective is to minimize the cost $c(S) = \sum_{s \in S} c(s)$ subject to $f(S) = f(U)$. 
In the \textsc{Connected Polymatroid}  problem, which generalizes \textsc{Submodular Cover}, the elements of $U$ are leaves of a tree and both elements and sets have cost. The goal is to select a set $S \subseteq U$ so that $f(S) = f(U)$ and $c(S) + c(T(S))$ is minimized, where $T(S)$ is the unique tree rooted at $r$ spanning $S$.

\subsection{Related work}
Johnson \cite{dj} and Lov\'asz \cite{L} showed that a greedy algorithm yields a $1 + \lg N$-approximation of {\setcover}, where $N$ is the number of elements. Chv\'atal \cite{chva} extended it to the weighted version. Slav\'ik \cite{Slavik96} refined the bound to $\ln n - \ln\ln n + O(1)$.

Lund and Yannakakis \cite{LLYY} showed that logarithmic factor was essentially best possible for polynomial-time algorithms, unless $NP \subseteq DTIME(n^{polylog(n)})$. Feige \cite{Feige} gave the precise lower bound that {\setcover} admits no $(1-\epsilon)$-approximation, for any $\epsilon > 0$, with a similar complexity assumption. Assuming the stronger {$\ethh$}, he shows that $\ln N - c\log\log N$-approximation is not possible for polynomial algorithms, for some $c > 0$. The work of Moshkovitz \cite{Moshkovitz15} and Dinur and Steurer \cite{DinurSteurer} combined shows that $(1-\epsilon)\ln N$-approximation is hard modulo $P = NP$. All the inapproximability results relate to two prover interactive proofs, via the related {\lc} problem.

Recent years have seen increased interest in subexponential time algorithms, including approximation algorithms. A case in point is the \emph{maximum clique} problem that has a trivial $2^{n/\alpha}\poly(n)$-time algorithm that gives a $\alpha$-approximation, for any $1 \le \alpha \le \sqrt{n}$, and Bansal et al.~\cite{bans} improved the time to $exp(n/\Omega(\alpha \log^2 \alpha))$. Chalermsook et al.~\cite{focs13} showed that this is nearly tight, as $\alpha$-approximation requires $exp(n^{1-\epsilon}/\alpha^{1+\epsilon})$ time, for any $\epsilon > 0$, assuming {$\ethh$}. 

For {\setcover}, 
Cygan, Kowalik and Wykurz\cite{Marek} gave a $(1-\alpha)\ln N$-approximation algorithm that runs in time $2^{O(N^\alpha)}$, for any $0 < \alpha < 1$. 
The results of \cite{Moshkovitz15,DinurSteurer} imply a $exp(N^{\alpha/c})$-time lower bound for $(1-\alpha)\ln N$-approximation, for some constant $c \ge 3$.
An unpublished report contains a conditional $exp(\Omega(N^{\alpha}))$-time lower bound for $(1-\alpha)\ln N$-approximation of {\setcover} \cite{CKL18}. In addition to $\ethh$, this also requires the less established Projection Games Conjecture (PGC). Unfortunately, the writeup of \cite{CKL18} defies easy verification. The current paper arose from an effort to make it comprehensible.

\textsc{Dst} can be approximated within $N^{\epsilon}$-factor, for any $\epsilon > 0$, in polynomial time \cite{slt,moses1}. In quasi-polynomial time, it can be approximated within $O(\log^2 N / \log\log N)$-factor \cite{GLL19}, which is also best possible in that regime \cite{eran1,GLL19}. This was recently extended to \textsc{Connected Polymatroid} \cite{GN20}.
For \textsc{Submodular Cover}, the greedy algorithm also achieves a $1+\ln N$-approximation \cite{W}.

\subsection{Organization}
The paper is organized so as to be accessible at different levels of detail. In Sec.~\ref{sec:hardness}, we derive
two different hardness results for subexponential time algorithms under different complexity assumptions: ETH and PGC. For this purpose, we only state (but not prove) the hardness reduction, and introduce the {\lc} problem with its key parameters: size, alphabet size, and degrees.

We prove the properties of the hardness reduction in Sec.~\ref{sec:reduction}, by combining two lemmas extracted from \cite{Moshkovitz15}. The proofs of these lemmas are given in Sec.~\ref{sec:pf1} and \ref{sec:pf2}. To make it easy for the reader to spot the  differences with the arguments of \cite{Moshkovitz15}, we underline the changed conditions or parameters in our presentation.

Finally, the approximation algorithm for Directed Steiner Tree is given in Sec.~\ref{sec:dst}

\section{Hardness of Set Cover}
\label{sec:hardness}

We give our technical results in this section. Starting with definition of {\lc} in Sec.~\ref{sec:prelim}, we give a reduction from {\lc} to {\setcover} in Sec.~\ref{sec:reduction}, and derive specific approximation hardness results in Sec.~\ref{sec:results}. A full proof of the correctness of the reduction is given in the following section.

\subsection{Label Cover}
\label{sec:prelim}

The intermediate problem in all known approximation hardness reductions for {\setcover} is the {\lc} problem.
\begin{definition}
In the $\lc$ problem with the projection property (a.k.a., the \emph{Projection Game}),
we are given a bipartite graph $G(A,B,E)$, finite alphabets (also called \emph{labels}) $\Sigma_A$ 
and $\Sigma_B$, and a function $\pi_e: \Sigma_A\rightarrow \Sigma_B$ for each edge $e\in E$.
A \emph{labeling} is a pair  
$\varphi_A:A\rightarrow \Sigma_A$ and $\varphi_B:B\rightarrow \Sigma_B$ 
of assignments of labels to the vertices of $A$ and $B$, respectively.
An edge $e=(a,b)$ is \emph{covered} (or \emph{satisfied}) by $(\varphi_A,\varphi_B)$ if
$\pi_e(\varphi_A(a))=\varphi_B(b)$.
The goal in $\lc$ is to find a labeling $(\varphi_A,\varphi_B)$
that covers as many edges as possible.
\end{definition}

The \emph{size} of a label cover instance $\cG = (G=(A,B,E),\Sigma_A, \Sigma_B, \Pi = \{\pi_e\}_e)$
is denoted by $n_{\cG} = |A| + |B| + |E|$. 
The \emph{alphabet size} is $\max(|\Sigma_A|, |\Sigma_B|)$.
The {\lc} instances we deal with will be \emph{bi-regular}, meaning that all nodes of the same bipartition have the same degree. We refer to the degree of nodes in $A$ ($B$) as the \emph{$A$-degree} (\emph{$B$-degree}), respectively.

{\lc} is a central problem in computational complexity, corresponding to \emph{projection PCPs}, or  probabilistically checkable proofs that make 2 queries.
A key parameter is the \emph{soundness error}:
\begin{definition}
  A {\lc} construction $\cG=\cG_\phi$, formed from a 3-SAT formula $\phi$, has \emph{soundness error} $\epsilon$ if: a) whenever $\phi$ is satisfiable, there is a labeling of $\cG$ that covers all edges, and b) when $\phi$ is unsatisfiable, every labeling of $\cG$ covers at most an $\epsilon$-fraction of the edges.
\end{definition}
  
Note that the construction is a schema that holds for a range of values for $\epsilon$.
A {\lc} construction is \emph{almost-linear size} if it is of size $n^{1+o(1)}$, possibly with extra $poly(1/\epsilon)$ factors.

  We use the following PCP theorem of Moshkovitz and Raz  \cite{r2}.
\begin{theorem}[\cite{r2}]
For every $\epsilon \ge 1/polylog(n)$, {\sat} on input of size $n$ can be reduced to {\lc} on a bi-regular graph of degrees $\poly(1/\epsilon)$, with soundness error $\epsilon$, size $n^{1+o(1)}\poly(1/\epsilon) = n^{1+o(1)}$, and alphabet size that is exponential in $poly(1/\epsilon)$. The reduction can be computed in time linear in the size and alphabet size of the Label-Cover.
\label{thm:RazMos}
\end{theorem}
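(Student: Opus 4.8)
The plan is to follow the algebraic (low-degree polynomial) route to near-linear-size PCPs; this is exactly the construction of Moshkovitz and Raz \cite{r2}, which we only need to cite but whose strategy is worth sketching. Fix the target soundness $\epsilon \ge 1/polylog(n)$. First I would \emph{arithmetize} the \SAT{} instance $\phi$ over a finite field $\mathbb{F}$ with $|\mathbb{F}| = q = \poly(1/\epsilon)$, working in dimension $m$ chosen so that $q^m \approx n$, i.e.\ $m \approx \log n / \log q$ (slightly super-constant). Membership $\phi \in \SAT{}$ then becomes: there is a polynomial $P$ of degree $O(m)$ on $\mathbb{F}^m$ that is Boolean on a designated subcube and makes an auxiliary low-degree polynomial, assembled from the clauses of $\phi$, vanish on that subcube --- a property one verifies by the standard ``zero-on-subcube'' (sum-check) reduction to a single low-degree test at a random point.

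Next I would package this as a two-table proof with the projection property. Table $B$ claims the evaluations of $P$ on all of $\mathbb{F}^m$; table $A$ claims, for every constant-dimensional affine subspace (``manifold'') $M \subseteq \mathbb{F}^m$, the restriction $P|_M$, a short object described by $q^{O(1)} = \poly(1/\epsilon)$ field elements. The verifier picks a random manifold $M$ and a random point $x \in M$, reads $A(M)$ and $B(x)$, and accepts iff $A(M)$ evaluated at $x$ equals $B(x)$ and $A(M)$ locally passes the clause check. This is a two-query test whose accepting predicate \emph{projects} the $A$-answer onto the $B$-answer, since the point value is a function of the manifold polynomial; bi-regularity and the $\poly(1/\epsilon)$ bounds on the $A$-degree (points per manifold) and $B$-degree (manifolds per point) follow by choosing a regular manifold/point incidence structure, a routine combinatorial adjustment. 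Completeness is immediate: the honest $P$ is accepted with probability $1$, which translates into a labeling of the resulting {\lc} instance covering every edge.

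The substantive step is soundness. I would invoke the low-degree testing theorem in its list-decoding form (``plane-vs-point'', Raz--Safra): if $B$ is accepted by the manifold-vs-point test with probability at least $\epsilon$, then $B$ agrees on a $\poly(\epsilon)$-fraction of $\mathbb{F}^m$ with one of at most $\poly(1/\epsilon)$ genuine low-degree polynomials; a consistency/pruning argument over this short list then forces one of them to (approximately) satisfy the arithmetized clause identity, so $\phi \in \SAT{}$. Contrapositively, when $\phi$ is unsatisfiable every strategy is accepted with probability at most $\epsilon$, i.e.\ every labeling covers at most an $\epsilon$-fraction of the edges. Finally, to pin the alphabet at $\exp(\poly(1/\epsilon))$ while keeping the projection property --- the local clause gadget inside the $A$-symbols is otherwise too large --- I would compose once with a constant-size inner projection PCP, as in \cite{r2}; composition multiplies the size by a $\poly(1/\epsilon)$ factor, the super-constant $m$ contributes the $n^{o(1)}$, and the whole reduction is computable in time linear in the output size and alphabet size.

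The main obstacle is obtaining near-linear size \emph{together with} sub-constant soundness $\epsilon = 1/polylog(n)$: naive $k$-fold parallel repetition would achieve soundness $\epsilon_0^{\Omega(k)}$ but at size $n^{\Theta(k)}$, which is hopeless in this regime. The algebraic construction above is, in effect, a \emph{derandomized} parallel repetition, a manifold playing the role of a tightly correlated bundle of $O(1)$ queries, and the delicate point is to make the derandomization essentially lossless --- the soundness of the low-degree test must degrade only polynomially, not exponentially, in $1/\epsilon$ --- while keeping $\mathbb{F}$, and hence the instance size, small. A secondary difficulty is carrying soundness through composition without a constant-factor blow-up in the exponents, which would spoil both the $\poly(1/\epsilon)$ dependence of the degrees and the $\exp(\poly(1/\epsilon))$ dependence of the alphabet.
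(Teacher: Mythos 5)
This theorem is stated in the paper purely as a citation of Moshkovitz and Raz \cite{r2}; the paper contains no proof of it, so there is nothing of the paper's to compare your argument against. What you have written is a plausible high-level reconstruction of how the cited construction works: arithmetization over a small field of size $\poly(1/\epsilon)$, the manifold-versus-point low-degree test, list-decoding soundness (Raz--Safra), and a composition step to tame the alphabet. You also correctly flag the two genuine difficulties --- achieving near-linear size simultaneously with sub-constant soundness without paying the $n^{\Theta(k)}$ cost of naive parallel repetition, and preserving the projection property and polynomial soundness degradation through composition --- which are indeed where the real work in \cite{r2} lies (via what they call locally decode/reject codes and a bespoke composition theorem preserving the two-query projection structure). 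As a sketch of the cited result your account is sound; just be aware that you are re-deriving (in outline) an external theorem the paper takes as a black box, not matching an argument the paper itself gives, and that the specific mechanism Moshkovitz and Raz use for the ``lossless derandomization'' and composition is rather more intricate than a single application of plane-versus-point plus a constant-size inner verifier.
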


Dinur and Steurer \cite{DinurSteurer} later gave a PCP construction whose alphabet size depends only polynomially on $1/\epsilon$. This is crucial for NP-hardness results, and combined with the reduction of \cite{Moshkovitz15}, implies the essentially tight bound of $(1-\epsilon)\ln N$-approximation of {\setcover} by poly-time algorithms. 

\subsection{Set Cover Reduction}
\label{sec:reduction}

We present here a reduction from a generic {\lc} (a two-prover PCP theorem) to the {\setcover} problem. This is extracted from the work of Moshkovitz \cite{Moshkovitz15}. The presentation in \cite{Moshkovitz15} was tightly linked with the PCP construction of Dinur and Steurer \cite{DinurSteurer} that was used in order to stay within polynomial time. When allowing superpolynomial time, it turns out to be more frugal to apply the older PCP construction of Moshkovitz and Raz \cite{r2}. This construction has exponential dependence on the alphabet size, which precludes its use in NP-hardness results. On the other hand, it has nearly-linear dependence on the size of the Label Cover, unlike Dinur-Steurer, and this becomes a dominating factor in subexponential reductions. 

Our main technical contribution is then to provide a standalone reduction from {\lc} to {\setcover} that allows specific PCP theorems to be plugged in.

We say that a reduction that originates in {\sat} achieves \emph{approximation gap} $\rho$ if there is a value $a$ such that {\setcover} instances originating in satisfiable formulas have a set cover of size at most $a$, while instances originating in unsatisfiable formulas have all set covers of size greater than $\rho \cdot a$.

\begin{theorem}
Let $\gamma > 0$ and $0 < \delta < \gamma$. 
There is a reduction from {\lc} to {\setcover} with the following properties. Let $\cG$ be a bi-regular {\lc} of almost-linear size $n_0$, 
soundness error parameter $\epsilon$, $B$-degree $poly(1/\epsilon)$, and alphabet size $\sigma_A(\epsilon)$. 
Then for each $\gamma > 0$, $\cG$ is reduced to a {\setcover} instance {\scsc}=\scsc$_{\cG,\gamma, \delta}$  with approximation gap $(1-\gamma)\ln N$,
$N = \tilde{O}(n_0^{1/(\gamma-\delta)})$ elements, and $M = \tilde{O}(n_0) \cdot \sigma_A(polylog(n))$ sets.
The time of the reduction is linear in the size of {\scsc}.
\label{T:setcover}
\end{theorem}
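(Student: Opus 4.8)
The plan is to prove Theorem~\ref{T:setcover} by assembling two self-contained lemmas, stated at the start of Section~\ref{sec:reduction} and proved in Sections~\ref{sec:pf1} and~\ref{sec:pf2}. The first is a purely combinatorial \emph{partition-system} lemma: for suitable parameters there is a ground set $[m]$ equipped with $L$ partitions into $d$ parts each such that no union of fewer than $(1-\eta)\,d\ln m$ parts drawn from pairwise distinct partitions covers $[m]$; moreover such a system can be found constructively whenever $m^{\Omega(\eta)}$ exceeds a fixed polynomial in $d$ and $\log L$. The second is the \emph{reduction lemma}: given any bi-regular {\lc} instance $\cG$ with the stated parameters, together with a partition system whose parameters $L$ and $d$ are chosen in terms of the $B$-degree, $|\Sigma_B|$ and $\epsilon$, one builds a {\setcover} instance — universe $E\times[m]$, one ``block'' $\{e\}\times[m]$ per edge, and one set $S_{a,\sigma}$, $S_{b,\tau}$ per vertex--label pair — for which (i) a labeling of $\cG$ covering all edges yields a set cover of size $a:=|A|+|B|$, and (ii) any set cover of size at most $\rho\cdot a$ can be decoded into a labeling covering more than an $\epsilon$-fraction of $E$. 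Theorem~\ref{T:setcover} then follows by feeding the partition system of the first lemma into the reduction lemma and tuning the free parameters as functions of $n_0,\gamma,\delta$.

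For the reduction lemma, completeness is direct: from a labeling $(\varphi_A,\varphi_B)$ satisfying all edges, take $S_{a,\varphi_A(a)}$ for every $a\in A$ and $S_{b,\varphi_B(b)}$ for every $b\in B$; the membership rules of the sets together with the projection identity $\pi_e(\varphi_A(a))=\varphi_B(b)$ guarantee that the block of each edge $e=(a,b)$ is covered, so the cost is $a=|A|+|B|$. For soundness one argues contrapositively: given a set cover of size at most $\rho\cdot a$, a counting argument shows that for a noticeable fraction of edges $e=(a,b)$ the block $\{e\}\times[m]$ is covered using only \emph{few} of the sets sitting at $a$ and $b$; the partition-system property then forces the $A$-labels used at $a$ to project under $\pi_e$ onto the $B$-labels used at $b$, i.e.\ the chosen sets at these edges are ``consistent''; finally an averaging/pigeonhole step across the good edges produces a single labeling of $\cG$ covering more than an $\epsilon$-fraction of $E$, contradicting its soundness error. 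The delicate part is quantitative: one must show that $\rho$ can be pushed up to $(1-\gamma)\ln N$, tracking the slack $\eta$ of the partition system, the averaging losses, and the relation between $\ln m$, $\ln|E|$ and $\ln N$; the intended normalization of the gadget makes the dishonest cost per block about $\ln m$ times the honest one, and since $m$ is polynomially related to $N$ this is $(1-o_\eta(1))\ln N$.

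For the partition-system lemma, the route is the classical probabilistic construction: place each element of $[m]$ independently and uniformly into one of the $d$ parts of each of the $L$ partitions; a fixed collection of $\ell=(1-\eta)\,d\ln m$ parts from distinct partitions fails to cover $[m]$ with probability at least $1-\exp(-m^{\eta(1-o(1))})$, so a union bound over the at most $\binom{L}{\ell}d^{\ell}$ such collections yields a good system as soon as $m^{\Omega(\eta)}$ beats a polynomial in $d$ and $\log L$; the system is then found by exhaustive search, which stays within the time budget once $d$ and $m$ are subexponentially small. To finish the theorem we apply the reduction lemma to the Moshkovitz--Raz {\lc} of Theorem~\ref{thm:RazMos} at soundness error $\epsilon=1/\mathrm{polylog}(n)$, so that $|\Sigma_B|$ and the $B$-degree are polylogarithmic while $|\Sigma_A|=\sigma_A(\mathrm{polylog}(n))$, and we set the gadget ground set to $m=\tilde O(n_0^{1/(\gamma-\delta)-1})$. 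Then $N=|E|\cdot m=\tilde O(n_0^{1/(\gamma-\delta)})$ and $\ln N=\frac{1}{\gamma-\delta}\ln n_0\,(1+o(1))$, the gadget existence condition $m^{\Omega(\eta)}\ge\mathrm{poly}(d,\log L)$ holds with room to spare for any fixed small $\eta$, the number of sets is one per vertex--label pair, namely $M=\tilde O(n_0)\cdot\sigma_A(\mathrm{polylog}(n))$, and the running time is dominated by writing down the instance, hence linear in its size.

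The main obstacle is the simultaneous balancing of parameters in this last step. We must push $N$ down to $n_0^{1/(\gamma-\delta)}$ — the smaller $N$ is, the stronger the resulting $\exp(N^{\gamma-\delta})$ lower bound under {$\ethh$} — while the partition system, whose ground set has size exactly $N/|E|$, must remain large enough relative to its number of partitions and parts for the covering-hardness property to hold with ratio approaching $\ln m$; and simultaneously the exponential alphabet of the Moshkovitz--Raz PCP must corrupt neither $N$ nor the gap. The two facts that make it go through are that {\setcover} inapproximability is measured against the number of \emph{elements} $N$ and not the number of \emph{sets} $M$, so $M$ may freely absorb the $\sigma_A$ factor, and that taking $\epsilon$ only polylogarithmically small keeps $d=\poly(1/\epsilon)$ polylogarithmic and the alphabet quasi-polynomial; the positive slack $\delta$ is precisely the budget consumed by the lower-order losses ($\eta$, the averaging, and $\ln|E|$ versus $\ln N$), which is why the final exponent is $1/(\gamma-\delta)$ rather than $1/\gamma$.
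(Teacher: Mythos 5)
Your reduction is a Feige-style two-prover construction, and it is \emph{not} what the paper does; more importantly, I believe it cannot be made to give the tight gap that Theorem~\ref{T:setcover} claims. You take the universe to be $E\times[m]$ with one block per edge, and you introduce a set $S_{a,\sigma}$ for every $A$-vertex--label pair and a set $S_{b,\tau}$ for every $B$-vertex--label pair, so an honestly covered block $\{e\}\times[m]$ is covered by exactly two sets. That forces your partition system to be, in effect, a system of partitions into \emph{two} parts per partition class. But the covering-hardness threshold for a $d$-part partition system is $\approx d\ln m$ random parts, and the honest cover uses $d$ parts (one from each position), so the multiplicative gap per block is $\approx \ln m$ \emph{only when $d$ is large}. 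With $d=2$ the threshold degrades to roughly $\log_2 m\approx 1.44\ln m$ against an honest cost of $2$, i.e.\ a gap of about $0.72\ln m$, not $(1-o(1))\ln m$. This is precisely the reason Feige's original proof uses a $k$-prover protocol with $k\to\infty$, and it is the reason a direct two-prover ({\lc}) reduction of the kind you sketch has never been known to yield the tight constant. Your ``delicate quantitative step'' is not a detail that can be filled in; it is the crux, and the construction you propose does not support it.

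The paper's reduction avoids this by a different decomposition. First it converts the {\lc} to one with small \emph{list-agreement soundness error} (Lemma~\ref{l:mos-combined}, combining Lemmas~4.4 and~4.7 of Moshkovitz~\cite{Moshkovitz15}): after this step, for any list-assignment $\varphi_A:A\to\binom{\Sigma_A}{\ell}$, the $A$-neighbours totally disagree on almost every $B$-vertex. Then (Lemma~\ref{lem:setcover}) the {\setcover} universe is $B'\times U$ (one copy of the partition universe per $B$-vertex, not per edge), and there are sets \emph{only} for $A$-vertex--label pairs, so the $D$ neighbours of a $B$-vertex are the ones jointly responsible for covering $\{b\}\times U$. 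This lets the partition system have $D$ parts, where $D$ is the $B'$-degree and can be taken as large as needed, so the gap per block genuinely tends to $\ln u$. The honest cover has size $|A'|$ (not $|A|+|B|$), and soundness is proved by an averaging over $B$-vertices followed directly by the list-agreement-soundness definition, with no separate Raz-style parallel-repetition or list-decoding argument. You should abandon the $E\times[m]$ universe and the $B$-vertex sets, and instead prove the two lemmas the paper isolates: the agreement-soundness amplification and the per-$B$-vertex set-cover gadget. The parameter bookkeeping in your final paragraph (choosing $u=n_1^{(1-\gamma')/\gamma'}$, relating $\ln u$ to $\ln N$, and hiding the $\sigma_A$ blow-up in $M$) is sound and matches the paper, but it sits on top of a gadget that does not deliver the stated gap.
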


\subsection{Approximation Hardness Results}
\label{sec:results}

When it comes to hardness results for subexponential time algorithms the standard assumption is the \emph{Exponential Time Hypothesis (ETH)}.
ETH asserts that the 3-$\sat$ problem on $n$ variables and $m$ clauses cannot be solved in $2^{o(n)}$-time. Impagliazzo, Paturi and Zane \cite{eth} showed that any 3-$\sat$ instance can be sparsified in $2^{o(n)}$-time to an instance with $m=O(n)$ clauses. 
When we refer to ${\SAT}$ input of size $n$, we mean 3-CNF formula on $n$ variables and $O(n)$ clauses.
Thus, $\ethh$ together with the sparsification lemma \cite{ethm} implies the following:
\begin{conjecture}
($\ethh$)
There is no $2^{o(n)}$-time algorithm that decides {\SAT} on inputs of size $n$. 
\end{conjecture}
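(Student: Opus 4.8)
The plan is to derive this restated form of $\ethh$ from the Exponential Time Hypothesis in its original form---that there is no $2^{o(n)}$-time algorithm deciding $3$-$\sat$ on $n$ variables with an \emph{unbounded} number $m$ of clauses---together with the Sparsification Lemma of Impagliazzo, Paturi and Zane \cite{eth}. Recall that the Sparsification Lemma states: for every $\varepsilon > 0$ there is a constant $C_\varepsilon$ and an algorithm that, given a $3$-CNF formula $\phi$ on $n$ variables, runs in time $2^{\varepsilon n}\poly(n)$ and outputs $3$-CNF formulas $\psi_1, \dots, \psi_t$ with $t \le 2^{\varepsilon n}$, each $\psi_i$ on (a subset of) the $n$ variables of $\phi$ and having at most $C_\varepsilon n$ clauses, such that $\phi$ is satisfiable if and only if some $\psi_i$ is satisfiable. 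After padding with dummy variables we may assume each $\psi_i$ has exactly $n$ variables; since $C_\varepsilon n = O(n)$, each $\psi_i$ is then a $\SAT$ input of size $n$ in the sense defined above.

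First I would isolate the intermediate statement that $\ethh$ (in its original form) implies that $3$-$\sat$ with $n$ variables and $m$ clauses admits no $2^{o(n+m)}$-time algorithm; the Conjecture follows at once, because restricting attention to instances with $m = O(n)$ collapses $2^{o(n+m)}$ to $2^{o(n)}$. To prove the intermediate statement, suppose for contradiction that some algorithm $\mathcal{A}$ decides $3$-$\sat$ in time $2^{g(n+m)}$ with $g(\ell) = o(\ell)$. Given an arbitrary $3$-CNF $\phi$ on $n$ variables (with $m$ unbounded), fix $\varepsilon > 0$, run the sparsification algorithm to obtain $\psi_1, \dots, \psi_t$ with $t \le 2^{\varepsilon n}$, each on $n$ variables and with at most $C_\varepsilon n$ clauses, and run $\mathcal{A}$ on each $\psi_i$, answering ``satisfiable'' iff some run accepts. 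For a fixed $\varepsilon$ each call to $\mathcal{A}$ costs $2^{g((1+C_\varepsilon)n)} = 2^{o(n)}$, so the whole procedure runs in time $2^{\varepsilon n}\poly(n) + 2^{\varepsilon n}\cdot 2^{o(n)} = 2^{(\varepsilon + o(1))n}$.

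The main obstacle is exactly this final bound: with $\varepsilon$ held to a constant we have only obtained a $2^{O(n)}$-time algorithm for $3$-$\sat$, not a $2^{o(n)}$-time one, so $\ethh$ is not yet contradicted. The standard remedy, which I would implement, is to make $\varepsilon$ vanish as a sufficiently slowly decreasing function $\varepsilon(n)$ of the number of variables: then the list size $2^{\varepsilon(n) n}$ becomes $2^{o(n)}$, and, provided $\varepsilon(n) \to 0$ slowly enough that the associated constant $C_{\varepsilon(n)}$ grows slowly relative to the rate at which $g(\ell)/\ell \to 0$, the per-call cost $2^{g((1+C_{\varepsilon(n)})n)}$ also stays $2^{o(n)}$. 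This yields a genuine $2^{o(n)}$-time algorithm for $3$-$\sat$ with an unbounded number of clauses, contradicting $\ethh$. What remains is routine bookkeeping: verifying that the $\poly(n)$ overheads and the $2^{\varepsilon(n)n}$ factor are absorbed into $2^{o(n)}$, that a valid choice of $\varepsilon(n)$ exists (equivalently, reformulating $\ethh$ in the ``$\exists\,\delta>0$: no $O(2^{\delta n})$-time algorithm'' form and using a fixed $\varepsilon < \delta$), and that the disjunctive structure of the sparsification reduction composes correctly with the decision procedure $\mathcal{A}$.
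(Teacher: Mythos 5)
Your approach is the same one the paper alludes to: the paper gives no proof, stating only that ETH together with the Impagliazzo--Paturi--Zane sparsification lemma implies the conjecture, and your write-up supplies the details (running the hypothetical fast decider on the $2^{\varepsilon n}$ sparsified instances and handling the parameter $\varepsilon$, either by letting it shrink with $n$ or by fixing $\varepsilon < \delta$ under the ``$\exists\,\delta > 0$'' form of ETH). One small logical slip worth noting: the step ``the Conjecture follows at once'' from your intermediate statement is not valid as a standalone deduction --- a lower bound for general $3$-$\sat$ does not transfer to the restriction to $m = O(n)$ clauses, because an algorithm specialized to sparse inputs need not decide arbitrary $3$-CNF formulas in $2^{o(n+m)}$ time, so it cannot refute the intermediate claim. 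The intermediate statement is an unnecessary detour: the sparsification argument you then give only ever invokes the hypothetical algorithm on formulas with $O_\varepsilon(n)$ clauses, so it already proves the Conjecture directly if you run it against a hypothetical $2^{o(n)}$-time decider for sparse instances rather than against a $2^{o(n+m)}$-time decider for general $3$-$\sat$.
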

We need only a weaker version: There is some $\zeta > 0$ such that there is no $exp(n^{1-\zeta})$-time algorithm to decide {\SAT}.

We are now ready for our main result.

\begin{theorem}
  Let $0 < \gamma < 1$ and $0 < \delta < \gamma$.
  Assuming ETH, there is no $(1-\gamma)\ln N$-approximation algorithm of {\setcover} with $N$ elements and $M$ sets that runs in time $exp(N^{\gamma - \delta}) \cdot poly(M)$.
\label{thm:mainresult}
\end{theorem}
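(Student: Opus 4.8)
The plan is to compose the near-linear-size PCP theorem of Moshkovitz and Raz (Theorem~\ref{thm:RazMos}) with the standalone Label-Cover-to-Set-Cover reduction of Theorem~\ref{T:setcover}, and to observe that, with the soundness error set to $\epsilon=1/polylog(n)$, the resulting {\setcover} instance is small enough that a hypothetical fast approximation algorithm would decide {\SAT} in $2^{o(n)}$ time. Assume, toward a contradiction, that for some $0<\delta<\gamma<1$ there is a $(1-\gamma)\ln N$-approximation algorithm $\mathcal{A}$ for {\setcover} running in time $exp(N^{\gamma-\delta})\cdot poly(M)$. Fix an internal \emph{slack parameter} $\delta':=\delta/2\in(0,\delta)$ and put $c:=(\gamma-\delta)/(\gamma-\delta')$, so $0<c<1$.

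Given a {\SAT} instance $\phi$ of size $n$, apply Theorem~\ref{thm:RazMos} with $\epsilon=1/polylog(n)$ (taken small enough, as a function of $\gamma$, that Theorem~\ref{T:setcover} attains gap $(1-\gamma)\ln N$). This produces in time $2^{polylog(n)}$ a bi-regular {\lc} instance $\cG$ of almost-linear size $n_0=n^{1+o(1)}$, with $B$-degree $poly(1/\epsilon)=polylog(n)$, alphabet size $\sigma_A(\epsilon)=2^{poly(1/\epsilon)}=2^{polylog(n)}$, and soundness error $\epsilon$; in particular $\cG$ has a labeling covering all edges iff $\phi$ is satisfiable. Feed $\cG$ into Theorem~\ref{T:setcover} with parameters $\gamma$ and $\delta'$ (the latter playing the role of its ``$\delta$''). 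In time linear in the output size this yields a {\setcover} instance $\mathcal{SC}$ of approximation gap $(1-\gamma)\ln N$ with $N=\tilde{O}(n_0^{1/(\gamma-\delta')})=n^{O(1)}$ elements and $M=\tilde{O}(n_0)\cdot\sigma_A(polylog(n))=2^{polylog(n)}$ sets; by the definition of approximation gap there is a computable value $a$ such that $\mathrm{OPT}(\mathcal{SC})\le a$ when $\phi$ is satisfiable and $\mathrm{OPT}(\mathcal{SC})>(1-\gamma)\ln N\cdot a$ otherwise.

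Now run $\mathcal{A}$ on $\mathcal{SC}$ and compare the size of the cover it returns against $(1-\gamma)\ln N\cdot a$; since $\mathcal{A}$ is a $(1-\gamma)\ln N$-approximation, this test separates the satisfiable from the unsatisfiable case and hence decides whether $\phi$ is satisfiable. The total running time is $2^{polylog(n)}$ (the two reductions) plus $exp(N^{\gamma-\delta})\cdot poly(M)$ (the call to $\mathcal{A}$). Since $N=n^{(1+o(1))/(\gamma-\delta')}\cdot polylog(n)$, we have $N^{\gamma-\delta}=n^{(1+o(1))c}=n^{c+o(1)}$, which for all large $n$ is at most $n^{(1+c)/2}=o(n)$ because $c<1$; and $poly(M)=2^{polylog(n)}=2^{o(n)}$. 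Thus we decide whether $\phi$ is satisfiable in $2^{o(n)}$ time --- in fact in $exp(n^{1-\eta})$ time with $\eta=(1-c)/2>0$ --- contradicting ETH, which proves the theorem.

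The step I expect to be the crux is making all the size parameters fit together simultaneously. The Moshkovitz--Raz PCP only delivers soundness down to $\epsilon=1/polylog(n)$, with alphabet $2^{poly(1/\epsilon)}=2^{polylog(n)}$ at that point; one must check that this $\epsilon$ is still small enough for Theorem~\ref{T:setcover} to reach gap $(1-\gamma)\ln N$ (which it is, since $\gamma$ is a fixed constant, in contrast to the $(1-o(1))\ln N$ NP-hardness regime that requires a polynomially small $\epsilon$), while at the same time keeping $M=2^{polylog(n)}=2^{o(n)}$ and $N=n^{O(1)}$ merely polynomial in $n$ --- the latter relying crucially on the \emph{near-linear} size of the Moshkovitz--Raz construction, as opposed to Dinur--Steurer. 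A second point to handle carefully is that $\delta'<\delta$ must be \emph{strict}: this is exactly what turns the allowed running time $exp(N^{\gamma-\delta})$ into $exp(n^{c})$ with a constant $c<1$, which ETH forbids, whereas $\delta'=\delta$ would only give $exp(n^{1+o(1)})$, a running time ETH does not rule out.
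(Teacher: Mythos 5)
Your proof is correct and follows essentially the same route as the paper: apply the Moshkovitz--Raz PCP (Theorem~\ref{thm:RazMos}) with $\epsilon = 1/polylog(n)$, feed the resulting almost-linear-size Label Cover into the standalone reduction of Theorem~\ref{T:setcover}, and observe that the hypothetical approximation algorithm would then decide {\SAT} in $2^{o(n)}$ time, contradicting ETH. Your explicit use of an internal slack parameter $\delta' < \delta$ when invoking Theorem~\ref{T:setcover} (so that $N^{\gamma-\delta} = n^{c+o(1)}$ with a constant $c < 1$) is a slightly more careful rendering of the step the paper compresses by writing $N = \tilde{O}(n_0^{1/\gamma})$.
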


\begin{proof}
We show how such an algorithm can be used to decide {\sat} in subexponential time, contradicting ETH.
Given a {\sat} instance of size $n$, apply  
Thm.~\ref{thm:RazMos} with $\epsilon = \Theta(1/\log^2 n)$, to obtain {\lc} instance $\cG$ of size $n_0 = n^{1+o(1)}$ and
alphabet size $\sigma_A(\epsilon) = exp(poly(1/\epsilon)) = exp(polylog(n))$.
Next apply Thm.~\ref{T:setcover} to obtain a {\setcover} instance {\scsc=\sclc} with $M = \tilde{O}(n_0) \sigma_A(polylog(n)) = exp(polylog(n))$ sets and $N = \tilde{O}(n_0^{1/\gamma}) = n^{1/\gamma+o(1)}$ elements, and approximation gap $(1-\gamma)\ln N$.

Suppose there is a $(1-\gamma)\ln N$-approximation algorithm of {\setcover} running in time $exp(N^{\gamma-\delta}) \cdot poly(M)$. 
Since it achieves this approximation, it can decide the satisfiability of $\phi$.
Since $N^{\gamma-\delta} = n^{(1/\gamma+o(1))\cdot (\gamma-\delta)} \le n^{1 - \delta'}$, for some $\delta' > 0$, the running time contradicts ETH. 
\end{proof}

\textbf{Note}: We could also allow the algorithm greater than polynomial complexity in terms of $M$ without changing the implication. In fact, the complexity can be as high as $exp(M^{\delta_0})$, for some small constant $\delta_0$.

\subsubsection{Still Tighter Bound Under Stronger Assumptions}
Moshkovitz \cite{Moshkovitz15} proposed a conjecture on the parameters of possible Label Cover constructions.
We require a particular version with almost linear size and low degree. 
\begin{definition}[The Projection Games Conjecture (PGC)]
3-SAT of inputs of size $n$ can be reduced to {\lc}
 of size $n^{1+o(1)}$, alphabet size $poly(1/\epsilon)$, and bi-regular degrees $poly(1/\epsilon)$, where $\epsilon$ is the soundness error parameter.
\end{definition}

The key difference of PGC from known PCP theorems is the alphabet size.
PGC is considered quite plausible and has been used to prove conditional hardness results for a number of problems \cite{Moshkovitz15,Muk19}.

By assuming PGC, we improve the dependence on $M$, the number of sets.

\begin{theorem}
  Let $0 < \gamma < 1$ and $0 < \delta < \gamma$.
  Assuming PGC and ETH, there is no $(1-\gamma)\ln N$-approximation, nor a $O(\log M)$-approximation, of {\setcover} with $N$ elements and $M$ sets that runs in time $exp(N^{\gamma - \delta} M^{1-\delta})$.
  \label{thm:pgc-result}
\end{theorem}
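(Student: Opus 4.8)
The plan is to rerun the proof of Theorem~\ref{thm:mainresult} with a single substitution: feed the reduction of Theorem~\ref{T:setcover} the Label Cover instances guaranteed by PGC instead of those of Theorem~\ref{thm:RazMos}. The only difference that matters is the alphabet size --- $\poly(1/\epsilon)$ under PGC versus $\exp(\poly(1/\epsilon))$ from Theorem~\ref{thm:RazMos}. Since Theorem~\ref{T:setcover} produces $M = \tilde O(n_0)\cdot\sigma_A$ sets, this substitution turns $M$ from quasi-polynomial in $n$ (its value in Theorem~\ref{thm:mainresult}) into $M = n^{1+o(1)}$, which is at most $N$. This polynomial bound on $M$ is exactly what makes the $M^{1-\delta}$ factor in the running time affordable, and what makes the $O(\log M)$-approximation and $(1-\gamma)\ln N$-approximation guarantees interchangeable on the hard instances.

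In detail: given a {\sat} instance $\phi$ of size $n$, apply PGC with soundness error $\epsilon = \Theta(1/\log^2 n)$ (the same choice as in the proof of Theorem~\ref{thm:mainresult}) to obtain a bi-regular {\lc} instance $\cG$ of size $n_0 = n^{1+o(1)}$, with $B$-degree $\poly(1/\epsilon)$ and alphabet size $\sigma_A = \poly(1/\epsilon) = \poly(\log n)$; in particular $\cG$ satisfies the hypotheses of Theorem~\ref{T:setcover}. Then apply Theorem~\ref{T:setcover} to $\cG$, with its first parameter set to $\gamma$ and its slack parameter (distinct from the $\delta$ of the present theorem) chosen to tend to $0$, obtaining a {\setcover} instance with $N = n^{1/\gamma+o(1)}$ elements, $M = \tilde O(n_0)\cdot\sigma_A = n^{1+o(1)}$ sets, and approximation gap $(1-\gamma)\ln N$. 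Since $\ln M = (1+o(1))\ln n = (\gamma+o(1))\ln N$, the gap $(1-\gamma)\ln N$ exceeds $c\ln M$ for every constant $c < (1-\gamma)/\gamma$ and all large $n$; hence any $c\ln M$-approximation also separates the two sides of the gap, which is what the $O(\log M)$-approximation clause amounts to (larger constants $c$ are handled by rerunning the reduction with a smaller first parameter, which only enlarges $N$ polynomially and keeps $M$ polynomial in $n$).

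It therefore suffices to rule out $(1-\gamma)\ln N$-approximation in time $\exp(N^{\gamma-\delta}M^{1-\delta})$. Suppose an algorithm $\mathcal{A}$ achieves this. Running $\mathcal{A}$ on the instance above decides the satisfiability of $\phi$, since its approximation ratio equals the gap of the reduction. The total running time is the reduction time, which is $\poly(n)$, plus the time of $\mathcal{A}$; substituting $N = n^{1/\gamma+o(1)}$ --- so that $N^{\gamma-\delta} = n^{1-\delta/\gamma+o(1)} = o(n)$, exactly as in Theorem~\ref{thm:mainresult} --- and $M = n^{1+o(1)}$, one verifies that this total time is $2^{o(n)}$. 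This is the single point where PGC rather than Theorem~\ref{thm:RazMos} is essential: with $M = \exp(\poly(\log n))$ the contribution of the $M$-dependent factor would already be super-polynomial in $n$. Hence {\sat} would be decided in time $2^{o(n)}$, contradicting ETH.

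The step I expect to be the main obstacle is the parameter bookkeeping in the construction of the {\setcover} instance: one must pick the soundness $\epsilon$ and the slack parameter of Theorem~\ref{T:setcover} so that $N$ lands at $n^{1/\gamma+o(1)}$ (keeping $N^{\gamma-\delta}$ comfortably below $n$) while $M$ stays polynomial in $n$, and then check that the combined $N$- and $M$-dependence of the running time never exceeds $2^{o(n)}$ and that the gap is at least as large as every approximation ratio we claim to rule out. Everything else --- correctness of the gap and the time of the reduction --- carries over unchanged from Theorem~\ref{T:setcover} and the proof of Theorem~\ref{thm:mainresult}.
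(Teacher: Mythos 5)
Your proposal follows the same approach as the paper's proof: substitute the PGC-provided Label Cover (polynomial alphabet size) into the reduction of Theorem~\ref{T:setcover} so that $M$ drops to $n^{1+o(1)} = N^{\gamma+o(1)}$, then observe that a $c\log M$-approximation would become a $(c\gamma+o(1))\ln N$-approximation (below the gap for suitable $\gamma$) and that the $M^{1-\delta}$ contribution in the exponent is $n^{1-\delta'}$, which still contradicts ETH. The bookkeeping you flag as the potential obstacle is handled identically by the paper, which is in fact terser than your writeup.
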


Namely, both the approximation factor and the time complexity can depend more strongly on the number of sets in the {\setcover} instance. The only result known in terms of $M$ is a folklore $\sqrt{M}$-approximation in polynomial time. 

\begin{proof}
We can proceed in the same way as in the proof of Thm.~\ref{thm:mainresult}, but starting from the conjectured {\lc} given by PGC, in which the alphabet size is polynomial in $\epsilon$. We then obtain a set cover instance {\sclc} that differs only in that we now have $M = |A||\Sigma_A| = n^{1+o(1)} = N^{\gamma+o(1)}$. So, a $c\log M$-approximation, with constant $c > 0$, implies a $c\gamma \ln N$-approximation, which is smaller than $(1-\gamma)\ln N$-approximation when $c\gamma < (1-\gamma)$ or $\gamma < 1/(2c)$. 
Also, $exp(M^{1-\delta}) = exp(n^{1-\delta'})$, for some $\delta' > 0$, and hence such a running time again breaks ETH.
\end{proof}

\section{Proof of the Set Cover Reduction}
\label{sec:reduction-proof}

We extract here a sequence of two reductions 
from the work of Moshkovitz. By untangling them from the {\lc} construction, we can use them for our standalone {\setcover} reduction. 

Moshkovitz \cite{Moshkovitz15} chooses some of the parameters of the lemmas so as to fit the purpose of proving NP-hardness of approximation. As a result, the size of the intermediate {\lc} instance generated grows to be a polynomial of degree larger than 1. This leads to weaker hardness results for sub-exponential time algorithms than what we desire. 
We indicate therefore how we can separate a key parameter to maintain nearly-linear size label covers.

A key tool in her argument is the concept of agreement soundness error.

\begin{definition}[List-agreement soundness error]
Let $\cG$ be a {\lc} for deciding the satisfiability of a Boolean formula $\phi$. Let $\varphi_A$ assign each $A$-vertex $\ell$ alphabet symbols. We say that the A-vertices \emph{totally disagree} on a vertex $b \in B$ if, there are no two neighbors $a_1, a_2 \in A$ of $b$ for which there exist $\sigma_1 \in \varphi_A(a_1), \sigma_2 \in \varphi_A(a_2)$ such that $\pi_{(a_1,b)}(\sigma_1) = \pi_{(a_2,b)}(\sigma_2)$.

We say that $\cG$ has \emph{list-agreement soundness error} $(\ell,\epsilon)$ if, for unsatisfiable $\phi$, for any assignment $\varphi_A:A \rightarrow {\binom{\sigma_A}{\ell}}$, the A-vertices are in total disagreement on at least $1-\epsilon$ fraction of the $B$-vertices.
\end{definition}

The reduction of Thm.~\ref{T:setcover} is obtained by stringing together two reductions: from {\lc} to a modified {\lc} with a low agreement soundness error, and from that to {\setcover}.

The first one is laid out in the following lemma that combines Lemmas 4.4 and 4.7 of \cite{Moshkovitz15}.
The proof is given in the upcoming subsection.

\begin{lemma}
Let $D \ge 2$ be a prime power, $q$ be a power of $D$, $\ell > 1$, and $\epsilon_0 > 0$. 
There is a polynomial reduction from a {\lc} with soundness error $\epsilon_0^2 D^2$ and $B$-degree $q$ to a {\lc} with list-agreement soundness error $(\ell, 2\epsilon_0 D^2 \cdot \ell^2)$ and $B$-degree $D$. The reduction preserves alphabets, and the size is increased by $poly(q)$-factor. 
\label{l:mos-combined}
\end{lemma}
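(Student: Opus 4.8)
The plan is to realize this reduction as a two-stage pipeline, mirroring Lemmas 4.4 and 4.7 of \cite{Moshkovitz15} but keeping the parameter $q$ (the original $B$-degree, which governs the size blowup) separated out rather than tied to the PCP construction. In the first stage I would apply a \emph{degree-reduction / graph-product} step: starting from the {\lc} instance $\cG$ with $B$-degree $q$ and soundness error $\epsilon_0^2 D^2$, replace each $B$-vertex by a structured collection of new $B$-vertices, each of which "sees" only a $D$-subset of the original $q$ neighbors, chosen so that the incidence structure forms a suitable combinatorial design (this is where $D$ being a prime power and $q$ a power of $D$ is used — one takes lines/subspaces in an affine or projective space over $\mathbb{F}_D$, or a Reed--Solomon-type code, so that every pair of original neighbors co-occurs in a controlled number of new blocks). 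Alphabets are untouched: the new edges inherit the old projection maps $\pi_e$. The size grows by the number of blocks per $B$-vertex, which is $\poly(q)$, giving the claimed size bound.

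The second stage is the \emph{list-to-agreement} argument. Here one shows that if a single-label soundness error of $\epsilon_0^2 D^2$ holds on the product instance, then allowing each $A$-vertex a list of $\ell$ labels cannot buy much: one argues by contradiction that if the $A$-vertices failed to totally disagree on more than a $2\epsilon_0 D^2 \ell^2$ fraction of $B$-vertices, then a random choice of one label per $A$-vertex from its list would, with the design property controlling collisions, satisfy more than an $\epsilon_0^2 D^2$ fraction of edges in expectation, contradicting the single-label soundness. The factor $\ell^2$ enters from a union bound over the $\binom{\ell}{2}$-or-so pairs of labels that could be responsible for an "agreement" on a $B$-vertex, the factor $D^2$ from the degree-$D$ local structure, and the jump from $\epsilon_0^2$ to $\epsilon_0$ is exactly the price of the Cauchy--Schwarz / second-moment step that converts "lists agree somewhere" into "a random single-label assignment does well". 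I would carry out: (i) define the design and the product instance precisely; (ii) verify completeness (a perfect labeling lifts to a perfect labeling); (iii) verify the $B$-degree is exactly $D$ and the size is $\poly(q) \cdot n_{\cG}$; (iv) prove the soundness contrapositive via the probabilistic argument above, tracking constants to land at $(\ell, 2\epsilon_0 D^2 \ell^2)$.

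The main obstacle I expect is step (iv): getting the \emph{quantitative} relationship between the single-label soundness $\epsilon_0^2 D^2$ and the list-agreement soundness $2\epsilon_0 D^2 \ell^2$ with the right dependence on all three of $\epsilon_0, D, \ell$ simultaneously. The delicate point is that the design must simultaneously (a) have $B$-degree exactly $D$, (b) cover pairs of original neighbors uniformly enough that a "local agreement" on a new $B$-vertex can be amplified into a global one, and (c) not blow the size past $\poly(q)$. Balancing these is precisely what forces the prime-power hypotheses on $D$ and $q$, and checking that the specific design from \cite{Moshkovitz15} (Lemmas 4.4 and 4.7) still delivers all three once $q$ is no longer pinned to a specific value is the crux of the adaptation. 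Everything else — completeness, degree bookkeeping, the size estimate — is routine once the design is fixed. Since the excerpt states that the proof is deferred to the upcoming subsection and that it "combines Lemmas 4.4 and 4.7 of \cite{Moshkovitz15}", I would structure the write-up as two clearly delineated claims corresponding to those two lemmas, underlining (as the authors promise elsewhere) the one place where our parameter choice diverges from Moshkovitz's, namely decoupling $q$ from $1/\epsilon$.
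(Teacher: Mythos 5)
Your proposal follows essentially the same route as the paper: stage one is exactly Lemma 4.4 of Moshkovitz (the product with the explicit bipartite graph $H=(U,V,E_H)$ from Lemma 4.3, with $|U|=q$, $V$-degree $D$, $|V|\le q^{O(1)}$), and stage two is Lemma 4.7 (a uniformly random single label from each list lands on a specific agreeing pair with probability at least $1/\ell^2$), and you correctly identify that the only change to Moshkovitz's argument is to treat $q$ as a free parameter (the $B$-degree) rather than the instance size. One small correction to your bookkeeping: the $\epsilon_0^2 \to \epsilon_0$ jump happens entirely inside stage one --- in the ``champion'' argument, where Markov gives an $\epsilon_0 D^2$ fraction of good $b$'s and the design property gives a $\Sigma_B$-symbol hitting at least an $\epsilon_0$ fraction of each good $b$'s neighbors, multiplying to $\epsilon_0^2 D^2$ --- while the list-to-single step costs only the clean factor $\ell^2$; the two losses are independent, not both incurred in the second stage.
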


Moshkovitz also gave a reduction from {\lc} with small agreement soundness error to {\setcover} approximation.
We extract a more general parameterization than is stated explicitly around Claim 4.10 in \cite{Moshkovitz15}. The proof, with minor changes from \cite{Moshkovitz15}, is given for completeness in Sec.~\ref{sec:pf2}.

\begin{lemma}[\cite{Moshkovitz15}, rephrased]
Let $\cG' = (G'=(A',B',E'), \Sigma_A, \Sigma_B, \Pi')$ be a bi-regular {\lc} instance with soundness parameter $\epsilon$  for deciding the satisfiability of a boolean formula $\phi$. Let $D$ be the $B'$-degree.
For every $\alpha$ \underline{with $2/D < \alpha < 1$},\footnote{Moshkovitz doesn't explicitly relate $\alpha$ and $D$, but indicates that $\alpha$ be small and $D$ "sufficiently large".} and any 
$u \ge (D^{O(\log D)} \log |\Sigma_{B'}|)^{1/\alpha}$,
there is a reduction from $\cG$ to a {\setcover} instance {\scsc=\scprime} with a certain choice of $\epsilon$ that attains the following properties: 
\begin{enumerate}
  \setlength{\itemsep}{1pt}
  \setlength{\parskip}{0pt}
  \setlength{\parsep}{0pt}
    \item \emph{Completeness:} If all edges of $G'$ can be covered, then {\scsc} has a set cover of size $|A'|$.
    \item \emph{Soundness:} If $\cG'$ has list-agreement soundness error $(\ell, \alpha)$, where $\ell = D (1-\alpha) \ln u $,
then every set cover of {\scsc} is of size more than $|A'|(1-2\alpha)\ln u$. 
    \item The number $N$ of elements of {\scsc} is $|B'| \cdot u$ and the number $M$ of sets is $|A'| \cdot |\Sigma_A|$. 
    \item The time for the reduction is polynomial in $|A'|, |B'|, |\Sigma_A|, |\Sigma_B|$ and $u$. 
\end{enumerate}
\label{lem:setcover}
\end{lemma}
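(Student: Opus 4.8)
The proof is Feige's {\setcover} reduction~\cite{Feige} in the form used by Moshkovitz~\cite{Moshkovitz15}, organised around \emph{partition systems}. Put $\ell=D(1-\alpha)\ln u$. A partition system on a ground set of size $u$ consists, for each ``colour'' $\tau\in\Sigma_B$, of a partition $(C_{\tau,1},\dots,C_{\tau,D})$ of the ground set into $D$ blocks, with two properties: (a) each colour class $\{C_{\tau,1},\dots,C_{\tau,D}\}$ covers the ground set (trivially, being a partition); and (b) any sub-collection of blocks that covers the ground set but, for every $\tau$, omits at least one block of colour $\tau$, has size more than $(1-\alpha)D\ln u$. A Feige-style counting/probabilistic argument, sharpened as in~\cite{Moshkovitz15}, shows such a system exists and is constructible deterministically in time $\poly(u,|\Sigma_B|)$; this is the step that uses the hypotheses $2/D<\alpha<1$ and $u\ge(D^{O(\log D)}\log|\Sigma_B|)^{1/\alpha}$, which are exactly what make the union bound over the $\approx(D|\Sigma_B|)^{(1-\alpha)D\ln u}$ candidate ``bad'' sub-collections go through. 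Given this, I build the {\setcover} instance $\mathcal{SC}$ as follows: take one fresh ground set $U_b$, carrying its own partition system with blocks $C^{(b)}_{\tau,j}$, for each $b\in B'$, so the universe $\bigsqcup_{b\in B'}U_b$ has $N=|B'|u$ elements; and for each $a\in A'$ and $\sigma\in\Sigma_A$ create one set
\[
S_{a,\sigma}\;=\;\bigsqcup_{b\,:\,(a,b)\in E'} C^{(b)}_{\,\pi_{(a,b)}(\sigma),\,j(a,b)},
\]
where $j(a,b)\in[D]$ is the rank of $a$ among the $D$ neighbours of $b$. This gives $M=|A'||\Sigma_A|$ sets, and the whole construction runs in time $\poly(|A'|,|B'|,|\Sigma_A|,|\Sigma_B|,u)$, which settles the claims on $N$, $M$ and the running time.

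For \emph{completeness}, suppose a labelling $(\varphi_A,\varphi_B)$ covers all edges of $G'$, and take $\mathcal S=\{S_{a,\varphi_A(a)}:a\in A'\}$, of size $|A'|$. For $b\in B'$ with neighbours $a_1,\dots,a_D$, the set $S_{a_j,\varphi_A(a_j)}$ meets $U_b$ exactly in $C^{(b)}_{\pi_{(a_j,b)}(\varphi_A(a_j)),\,j}=C^{(b)}_{\varphi_B(b),\,j}$, so $\mathcal S$ induces on $U_b$ precisely the colour class of $\varphi_B(b)$, which covers $U_b$. Hence $\mathcal S$ covers the universe.

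For \emph{soundness}, let $\mathcal S$ be any set cover of $\mathcal{SC}$, $k=|\mathcal S|$, and for $a\in A'$ put $\varphi_A(a)=\{\sigma:S_{a,\sigma}\in\mathcal S\}$, so $k=\sum_a|\varphi_A(a)|$. Let $D_A=|E'|/|A'|=|B'|D/|A'|$ be the (uniform) $A'$-degree and, for $b\in B'$ with neighbours $a^b_1,\dots,a^b_D$, set $q_b=\sum_{j\in[D]}|\varphi_A(a^b_j)|$; counting the incidences $\{(a,b,\sigma):S_{a,\sigma}\in\mathcal S,\ b\sim a\}$ in two ways gives $k=\tfrac1{D_A}\sum_{b\in B'}q_b$. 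I claim $q_b>(1-\alpha)D\ln u$ for every $b$ outside a set of size at most $\alpha|B'|$. Fix $b$. (i) If some neighbour $a$ of $b$ has $|\varphi_A(a)|>\ell$, then already $q_b\ge|\varphi_A(a)|>(1-\alpha)D\ln u$. (ii) If the $A$-vertices totally disagree on $b$ under the lists $\varphi_A$, then for no colour $\tau$ do two distinct colour-$\tau$ blocks occur among the blocks $\mathcal S$ uses on $U_b$; as $D\ge2$, $\mathcal S$ then covers $U_b$ while omitting a block of every colour, so by property (b) it uses more than $(1-\alpha)D\ln u$ blocks on $U_b$, and the number of blocks used on $U_b$ is at most $q_b$, so again $q_b>(1-\alpha)D\ln u$. (iii) Otherwise $b$ has all neighbouring lists of size $\le\ell$ and fails total disagreement under $\varphi_A$; defining $\varphi_A':A'\to\binom{\Sigma_A}{\ell}$ by padding each list of size $\le\ell$ up to size exactly $\ell$ and truncating each larger one to size $\ell$ (using $|\Sigma_A|\ge\ell$), a case-(iii) vertex still fails total disagreement under $\varphi_A'$ because its neighbouring lists only grew, so by the $(\ell,\alpha)$ list-agreement soundness of $\cG'$ there are at most $\alpha|B'|$ such $b$. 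This proves the claim, and therefore
\[
k \;=\; \frac1{D_A}\sum_{b\in B'}q_b \;>\; \frac{(1-\alpha)^2|B'|D\ln u}{|B'|D/|A'|} \;=\; (1-\alpha)^2|A'|\ln u \;>\; (1-2\alpha)|A'|\ln u ,
\]
which is the soundness statement; the value of $\epsilon$ for $\cG'$ enters only through which list-agreement soundness pair $(\ell,\alpha)$ is available, via Lemma~\ref{l:mos-combined}.

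The main obstacle is the partition-system step: the straightforward random construction yields ground sets of size exponential in $D$, whereas the required bound $u\ge(D^{O(\log D)}\log|\Sigma_B|)^{1/\alpha}$ allows size only quasi-polynomial in $D$, so one must invoke the sharpened construction of~\cite{Moshkovitz15} and derandomise it (by the method of conditional expectations) to keep the reduction deterministic and polynomial-time. By contrast, once the partition system is in hand, completeness is immediate and soundness is just the double counting above; the only subtlety there is the observation that a vertex $b$ with a ``heavy'' neighbour ($|\varphi_A(\cdot)|>\ell$) is automatically good, which is exactly why the list-size threshold $\ell$ is calibrated to equal the partition-system bound $(1-\alpha)D\ln u$.
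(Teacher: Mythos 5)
Your proof is correct and follows essentially the same approach as the paper: the set-cover construction (elements $B'\times U$, sets $S_{a,\sigma}$ built from partition-system blocks indexed by the projected colour and the edge rank), the completeness argument, and the role of the partition system combined with list-agreement soundness are all identical. The one cosmetic difference is in the soundness step: the paper argues by contradiction (assume $|C|\le(1-2\alpha)|A'|\ln u$, apply Markov's inequality to the induced per-$b$ counts $s_b$, then extract a list assignment from the cover), whereas you prove the lower bound directly via double counting and a three-way case analysis on $q_b$, finishing with $(1-\alpha)^2>1-2\alpha$; these are logically the same argument, though your explicit case (i) for a heavy neighbour is a nice touch. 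One small note: you treat the quasi-polynomial partition system as something to be re-derived and derandomised; the paper simply cites the explicit construction of Naor, Schulman and Srinivasan (Lemma~\ref{l:nss}), which is the intended black box.
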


Given these lemmas, we can now prove Thm.~\ref{T:setcover}, which we restate for convenience.

\noindent \textbf{Theorem \ref{T:setcover} (restated)}. 
\emph{
Let $\gamma > 0$ and $0 < \delta < \gamma$. 
There is a reduction from {\lc} to {\setcover} with the following properties. Let $\cG$ be a bi-regular {\lc} of almost-linear size $n_0$, 
soundness error parameter $\epsilon$, $B$-degree $poly(1/\epsilon)$, and alphabet size $\sigma_A(\epsilon)$. 
Then for each $\gamma > 0$, $\cG$ is reduced to a {\setcover} instance {\scsc}=\scsc$_{\cG,\gamma, \delta}$  with approximation gap $(1-\gamma)\ln N$,
$N = \tilde{O}(n_0^{1/(\gamma-\delta)})$ elements, and $M = \tilde{O}(n_0) \cdot \sigma_A(polylog(n))$ sets.
The time of the reduction is linear in the size of {\scsc}.
}

\begin{proof}[Proof of Thm.~\ref{T:setcover}]
Let $\alpha = 2\delta$, $D$ be a prime power at least $2/\alpha$, and $\gamma' = (\gamma-\delta)/(1-\delta)$.
Let $n_1 = \tilde{O}(n_0)$ be the size of the instance that is formed by Lemma \ref{l:mos-combined} on $\cG$ with $\epsilon = \Theta(1/\log^4 n)$.
Let $u = n_1^{(1-\gamma')/\gamma'}$ and note that $\log u = \Theta(\log n)$.
Let $\ell = D(1-\alpha) \ln u = \Theta(\log n)$, 
$\epsilon = \alpha^2 D^{-2} \ell^{-4}/4 = \Theta(1/\log^4 n)$, and $\epsilon_0 = \sqrt{\epsilon}/D = \alpha/(2D^2\ell^2) = \Theta(1/\log^2 n)$.
Let $q$ be the $B$-degree of $\cG$.

We apply Lemma \ref{l:mos-combined} to $\cG$ (with $D$, $q$, $\ell$ and $\epsilon_0$) and obtain a {\lc} instance $\cG' = (G'=(A',B',E'), \Sigma_A, \Sigma_B, \Pi)$ with $B$-degree $D$ and size $n_1$, with alphabets unchanged.
The list-agreement soundness error of $\cG'$ is $(\ell, 2\epsilon_0 \ell^2 D^2)$, or $(\ell, \alpha)$. 

We can verify that $\cG'$, $\alpha$, and $D$ and $u$ satisfy the prerequisites of Lemma \ref{lem:setcover}, which yields a {\setcover} instance {\scsc=\sclc} with $N = |B'| \cdot u = n_1 \cdot n_1^{(1-\gamma')/\gamma'} = n_1^{1/\gamma'} = \tilde{O}(n_0^{(1-\delta)/(\gamma-\delta)}) = \tilde{O}(n_0^{1/(\gamma-\delta)})$ elements and $M = |A'| \cdot |\Sigma_A|$ sets.

If $\phi$ is satisfiable, then {\scsc} has set cover of size $|A'|$, while if it is unsatisfiable, then every set cover of {\scsc} has size more than $|A'|(1-2\alpha)\ln u = |A'|(1-\delta)\ln u$. 
Note that $\ln u = \ln (N/n_1) = (1-\gamma') \ln N$.
Hence, the approximation gap is $(1-\delta)\ln u = (1-\delta)(1-\gamma')\ln N = (1-\gamma)\ln N$.
\end{proof}


\subsection{Proof of Lemma \ref{l:mos-combined}}
\label{sec:pf1}

We give here a full proof of Lemma \ref{l:mos-combined}, based on \cite{Moshkovitz15}, with minor modification.

When the labeling assigns a single label to each node, i.e., when $\ell=1$, the list-agreement soundness error reduces to \emph{agreement soundness error}, which is otherwise defined equivalently.
Moshkovitz first showed how to reduce a {\lc} with small soundness error to one with a small agreement soundness error. The lemma stated here is unchanged from \cite{Moshkovitz15} except that Moshkovitz used the parameter name $n$ instead of our parameter $q$. \footnote{This invited confusion, since $n$ was also used to denote the size of the {\lc} (like we do here).}

\begin{lemma}[Lemma 4.4 of \cite{Moshkovitz15}]
Let $D \ge 2$ be a prime power and \underline{let $q$ be a power of $D$}. Let $\epsilon_0 > 0$. 
There is a polynomial reduction from a {\lc} with soundness error $\epsilon_0^2 D^2$ and $B$-degree $q$ to a {\lc} with agreement soundness error $2\epsilon_0 D^2$ and $B$-degree $D$. The reduction preserves alphabets, \underline{and the size is increased by $poly(q)$-factor.} 
\label{l:mos}
\end{lemma}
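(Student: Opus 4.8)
The plan is to use the classical ``restriction to affine lines'' gadget, which is available precisely because $q=D^k$ for some $k\ge 1$. Given $\cG=(G=(A,B,E),\Sigma_A,\Sigma_B,\Pi)$ with $B$-degree $q$, build $\cG'=(G'=(A',B',E'),\Sigma_A,\Sigma_B,\Pi')$ by setting $A'=A$ and, for each $b\in B$, fixing a bijection between its $q$ neighbours and the points of the affine space $\mathbb{F}_D^k$; then for every affine line $L\subseteq\mathbb{F}_D^k$ introduce a $B'$-vertex $(b,L)$ joined to the $D$ neighbours of $b$ lying on $L$, giving the edge $((b,L),a)$ the projection $\pi_{(a,b)}$. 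Since every point of $\mathbb{F}_D^k$ lies on exactly $\frac{q-1}{D-1}$ lines and $\cG$ is bi-regular, $\cG'$ is bi-regular with $B$-degree $D$; the alphabets are untouched; and $|B'|=|B|\cdot\frac{q(q-1)}{D(D-1)}$, $|E'|=D|B'|$, $|A'|=|A|$, so the size blows up by a $\poly(q)$ factor and the reduction is clearly polynomial-time. Completeness is immediate: a labelling of $\cG$ covering all edges yields one of $\cG'$ by setting $\varphi_B'((b,L))=\varphi_B(b)$, so $\cG'$ is again a {\lc} for deciding the same formula $\phi$.

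The heart of the argument is the soundness bound, and the key idea is that the family of affine lines is a good ``sampler''. Assume $\phi$ is unsatisfiable and fix any $\varphi_A'=\varphi_A\colon A\to\Sigma_A$. For $b\in B$ consider the colouring $\chi_b\colon\mathbb{F}_D^k\to\Sigma_B$ with $\chi_b(a)=\pi_{(a,b)}(\varphi_A(a))$, and let $s_b$ denote the size of its largest colour class. The $A$-vertices fail to totally disagree on $(b,L)$ exactly when $\chi_b$ is non-injective on $L$, i.e.\ $L$ contains a monochromatic pair of points. As every monochromatic pair at $b$ lies on a unique affine line, the number of such ``bad'' lines at $b$ is at most the number of monochromatic pairs, namely $\sum_\sigma\binom{|\chi_b^{-1}(\sigma)|}{2}\le\tfrac12(s_b-1)q$; dividing by the total number $\frac{q(q-1)}{D(D-1)}$ of lines, the fraction $\beta_b$ of bad lines at $b$ satisfies $\beta_b\le\frac{(s_b-1)D(D-1)}{2(q-1)}<\frac{s_bD^2}{2(q-1)}$. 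Since $B'$ is a uniform union of equally-sized blocks $\{(b,L):L\}$, agreement on more than a $2\epsilon_0D^2$ fraction of $B'$ means $\E_b[\beta_b]>2\epsilon_0D^2$, hence $\E_b[s_b]>4\epsilon_0(q-1)$.

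To finish, take the labelling of $\cG$ that keeps $\varphi_A$ and sets $\varphi_B(b)$ to a colour attaining the maximum $s_b$ in $\chi_b$. Then exactly $s_b$ of the $q$ edges incident to $b$ are covered, so the covered fraction of $E$ is $\tfrac1q\E_b[s_b]>4\epsilon_0\tfrac{q-1}{q}\ge 2\epsilon_0$, using $q\ge 2$. The target agreement soundness error $2\epsilon_0D^2$ is a fraction, so the lemma is vacuous unless $2\epsilon_0D^2<1$; assuming this, $2\epsilon_0>\tfrac{\epsilon_0}{2}\ge\epsilon_0^2D^2$. Thus $\cG$ would have a labelling covering more than an $\epsilon_0^2D^2$ fraction of its edges, contradicting its soundness error. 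Hence $\cG'$ has agreement soundness error $2\epsilon_0D^2$, as claimed.

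I expect the only delicate points to be (i) the double-counting that bounds the fraction of bad lines by (essentially) the largest colour class — this is where the line structure does its work as a sampler — and (ii) the final parameter bookkeeping, which goes through precisely because $2\epsilon_0D^2$, being a fraction, forces $\epsilon_0D^2<1/2$. The hypothesis that $q$ is a power of $D$ is exactly what provides the $\mathbb{F}_D^k$ line family and the attendant $\poly(q)$ size increase; without it one would need to substitute some other explicit $D$-uniform sampler on $q$ points.
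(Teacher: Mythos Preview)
Your proof is correct and uses the same construction as the paper: the paper invokes Lemma~\ref{lem:combinatorial} (the sampler graph $H$) as a black box, and that graph is precisely the point--line incidence graph of $\mathbb{F}_D^k$ that you build explicitly. The only real difference is in the soundness analysis. The paper uses a two-step averaging: first a Markov argument isolates an $\epsilon_0 D^2$ fraction of ``good'' $b$'s on which many lines are non-injective, and then the sampler property of $H$ yields, for each good $b$, a champion colour of density $>\epsilon_0$, giving covered fraction $>\epsilon_0^2 D^2$ directly. You instead bound the bad-line fraction $\beta_b$ pointwise by the monochromatic-pair count, average once to make $\E_b[s_b]$ large, obtain covered fraction $>2\epsilon_0$, and only then invoke the bookkeeping $2\epsilon_0>\epsilon_0^2 D^2$. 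Your route is more self-contained (it essentially reproves Lemma~\ref{lem:combinatorial} inline) and even gives a slightly stronger intermediate bound, at the cost of the extra parameter check at the end; the paper's route is more modular and lands exactly on the target $\epsilon_0^2 D^2$ without that detour.
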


We have underlined the parts that changed because of using $q$ as parameter instead of $n$.
The proof is based on the following combinatorial lemma, whose proof we omit. 
We note that the set $U$ here is different from the one used in Lemma \ref{lem:setcover} (but we retained the notation to remain faithful to \cite{Moshkovitz15}).

\begin{lemma}[Lemma 4.3 of \cite{Moshkovitz15}]
For $0 < \epsilon < 1$, for a prime power $D$, and $q$ that is a power of $D$, there is an explicit construction of a regular bipartite graph $H=(U,V,E)$ with $|U| = q$, $V$-degree $D$, and $|V| \le q^{O(1)}$ that satisfies the following. For every partition $U_1, \ldots, U_\ell$ of $U$ into sets such that $|U_i| \le \epsilon |U|$ for $i=1,2,\ldots, \ell$, the fraction of vertices $v \in V$ with more than one neighbor in any single set $U_i$, is at most $\epsilon D^2$.
\label{lem:combinatorial}
\end{lemma}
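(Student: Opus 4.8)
I would realize $H$ as the point--line incidence graph of a finite affine geometry over $\mathbb{F}_D$. Since $q$ is a power of $D$, write $q = D^m$ and identify $U$ with $\mathbb{F}_D^m$ (equivalently, with $\mathbb{F}_q$ regarded as an $m$-dimensional $\mathbb{F}_D$-vector space). Take $V$ to be the set of affine lines of $\mathbb{F}_D^m$, i.e.\ the cosets $a + \mathbb{F}_D b$ of $1$-dimensional linear subspaces ($b \neq 0$), and put an edge between $u \in U$ and $L \in V$ whenever $u \in L$. Each line has exactly $D$ points, so the $V$-degree is $D$; every point lies on $(q-1)/(D-1)$ lines, so $H$ is bi-regular; and two distinct points of $U$ lie on exactly one common line, so $|V| = \binom{q}{2}/\binom{D}{2} < q^2 = q^{O(1)}$ and, most importantly, every pair of distinct $U$-vertices has \emph{co-degree exactly $1$}. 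This ``$2$-design with $\lambda = 1$'' property is the entire content of the construction; it is clearly explicit, as all lines can be enumerated in $\poly(q)$ time.

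Then, given a partition $U_1, \dots, U_\ell$ with each $|U_i| \le \epsilon|U| = \epsilon q$, I would call a pair $\{x,y\} \subseteq U$ \emph{monochromatic} if $x$ and $y$ lie in the same part, and call $v \in V$ \emph{bad} if its neighbourhood contains a monochromatic pair. The key step is a double-counting argument: since a monochromatic pair lies on a unique line, selecting for each bad line one monochromatic pair contained in it defines an injection from bad lines into the set of monochromatic pairs. Hence the number of bad $v$ is at most $\sum_i \binom{|U_i|}{2} \le \tfrac12 \epsilon q \sum_i |U_i| = \tfrac12 \epsilon q^2$. Dividing by $|V| = \binom{q}{2}/\binom{D}{2} = q(q-1)/(D(D-1))$ gives a bad fraction of at most $\tfrac{\epsilon q}{2(q-1)} D(D-1) \le \epsilon D(D-1) \le \epsilon D^2$, using $q/(q-1) \le 2$ for $q \ge 2$ (the degenerate case $m=0$ is excluded). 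This is exactly the claimed bound.

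The step I expect to be the crux is choosing the construction: one has to notice that the right object is a $2$-$(q, D, 1)$ design, equivalently the family of affine lines, which simultaneously forces block size $D$, a polynomially bounded number of blocks, and a \emph{pair-independent} co-degree. Once co-degree $1$ is available, the rest is the routine counting above. One subtlety worth flagging is that the guarantee must hold for \emph{every} admissible partition rather than on average, so there is no room for a union bound over partitions; but precisely because the co-degree does not depend on the pair, the per-partition estimate above is already worst-case over partitions, and no such union bound is needed. An expander- or sampler-based choice of $H$ would instead force one to bound, for each partition, a co-degree-weighted count of monochromatic pairs, which is exactly the quantity that a genuine design trivializes.
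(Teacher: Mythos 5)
The paper does not prove this lemma: it explicitly omits the proof (``whose proof we omit'') and cites \cite{Moshkovitz15}, so there is no in-paper argument to compare against, and I can only assess the correctness of what you wrote.

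Your proof is correct. Identifying $U$ with $\mathbb{F}_D^m$ (possible exactly because $q$ is a power of the prime power $D$) and taking $V$ to be the affine lines gives $V$-degree $D$, $U$-degree $(q-1)/(D-1)$, and $|V| = q(q-1)/(D(D-1)) < q^2$, as required. The co-degree-one property (every pair of distinct points lies on exactly one line) reduces the soundness bound to a one-line double count: each bad line contains a monochromatic pair, each monochromatic pair lies on a unique line, and the number of monochromatic pairs is at most $\sum_i \binom{|U_i|}{2} \le \tfrac12\epsilon q \cdot q$, so the bad fraction is at most $\epsilon q D(D-1)/(2(q-1)) \le \epsilon D(D-1) \le \epsilon D^2$. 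Your remark that pair-independence of the co-degree makes the per-partition estimate automatically worst-case, with no union bound needed, is exactly the right way to see why a $2$-design is the correct object here. Two small points worth tightening: dispatch the degenerate case $q=1$ (technically allowed by the hypotheses) in a sentence, and state explicitly the $U$-side degree so that $H$ is visibly (bi-)regular as the lemma asserts. I cannot verify from this paper alone whether your construction coincides with Moshkovitz's, but the hypotheses ($D$ a prime power, $q$ a power of $D$) are precisely what the affine-lines design requires, so it is at least in the same spirit; in any case your argument is self-contained and yields the claimed bound.
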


Again, we used the parameter name $q$, rather than $n$ as in \cite{Moshkovitz15}.
We show how to take a {\lc} with standard soundness and convert it to a {\lc} instance with total disagreement soundness, by combining it with the graph from Lemma \ref{lem:combinatorial}

\begin{proof}[Proof of Lemma \ref{l:mos}]
Let $\cG = (G=(A,B,E), \Sigma_A, \Sigma_B, \Pi)$ be the original {\lc}. Let $H=(U,V,E_H)$ be the graph from Lemma \ref{lem:combinatorial}, where $q, D$ and $\epsilon$ are as given in the current lemma. Let us use $U$ to enumerate the neighbors of a $B$-vertex, i.e., there is a function $E^{\leftarrow}:B \times U \rightarrow A$ that given a vertex $b \in B$ and $u \in U$ gives us the $A$-vertex which is the $u$ neighbor (in $G$) of $b$.

We create a new {\lc} $\cG' = (G=(A,B\times V, E'), \Sigma_A, \Sigma_B, \Pi')$. The intended assignment to every vertex $a \in A$ is the same as its assignment in the original instance. The intended assignment to a vertex $\langle b,v\rangle \in B \times V$ is the same as the assignment to $b$ in the original game. We put an edge $e' = (a, \langle b,v\rangle)$ if $E^{\leftarrow}(b,u) = a$ and $(u,v) \in E_H$. We define $\pi_{e'} \equiv \pi_{(a,b)}$.

If there is an assignment to the original instance that satisfies $c$ fraction of its edges, then the corresponding assignment to the new instance satisfies $c$ fraction of its edges.

Suppose there is an assignment for the new instance $\varphi_A : A \rightarrow \Sigma_A$ in which more than $2 \epsilon D^2$ fraction of the vertices in $B \times V$ do not have total disagreement.

Let us say that $b \in B$ is "good" if for more than an $\epsilon D^2$ fraction of the vertices in $\{b\} \times V$, the $A$-vertices do not totally disagree. Note that the fraction of good $b \in B$ is at least $\epsilon D^2$. 

Focus on a good $b \in B$. Consider the partition of $U$ into $|\Sigma_B|$ sets, where the set corresponding to $\sigma \in \Sigma_B$ is:
\[ U_\sigma = \{u \in U | a = E^\leftarrow(b,u) \wedge e=(a,b) \in E_G \wedge \pi_e(\varphi_A(a))=\sigma \}\ . \]

By the goodness of $b$ and the property of $H$, there must be $\sigma \in \Sigma_B$ such that $|U_\sigma| > \epsilon |U|$. We call $\sigma$ the "champion" for $b$.

We define an assignment $\varphi_B:B\rightarrow \Sigma_B$ that assigns good vertices $b$ their champions, and other vertices $b$ arbitrary values. The fraction of edges that $\varphi_A$, $\varphi_B$ satisfy in the original instance is at least $\epsilon^2 D^2$. 
\end{proof}

Moshkovitz then shows that small agreement soundness error translates to the list version.
The proof is unchanged from \cite{Moshkovitz15}.

\begin{lemma}[Lemma 4.7 of \cite{Moshkovitz15}]
Let $\ell \ge 1$, $0 < \epsilon' < 1$. A {\lc} with agreement soundness error $\epsilon'$ has list-agreement soundness error $(\ell, \epsilon' \ell^2)$. 
\label{l:las}
\end{lemma}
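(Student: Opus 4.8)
The plan is to derive the list version from the single-label version via a random choice of a ``slot'' for each $A$-vertex. Fix an unsatisfiable formula $\phi$ and an arbitrary list assignment $\varphi_A : A \to \binom{\Sigma_A}{\ell}$; the goal is to show that the $A$-vertices totally disagree (in the list sense) on at least a $(1-\epsilon'\ell^2)$-fraction of the $B$-vertices. First I would fix, for each $a \in A$, an arbitrary enumeration $\varphi_A(a) = \{\sigma_a^1,\dots,\sigma_a^\ell\}$, and for every function $f : A \to [\ell]$ let $\varphi_A^f$ denote the single-label assignment $a \mapsto \sigma_a^{f(a)}$.

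The crucial step is a ``one-in-$\ell^2$'' observation: if the $A$-vertices do \emph{not} totally disagree on a vertex $b$ under $\varphi_A$, then for a uniformly random $f$ they fail to totally disagree on $b$ under $\varphi_A^f$ with probability at least $1/\ell^2$. Indeed, such a list-disagreement failure at $b$ is witnessed by two \emph{distinct} neighbors $a_1, a_2$ of $b$ and symbols $\sigma_{a_1}^{i_1} \in \varphi_A(a_1)$, $\sigma_{a_2}^{i_2} \in \varphi_A(a_2)$ with $\pi_{(a_1,b)}(\sigma_{a_1}^{i_1}) = \pi_{(a_2,b)}(\sigma_{a_2}^{i_2})$; since $a_1 \neq a_2$, the events $f(a_1) = i_1$ and $f(a_2) = i_2$ are independent, each of probability $1/\ell$, and on their intersection $\varphi_A^f$ reproduces exactly this collision at $b$.

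Then I would count in two ways. For every fixed $f$, the map $\varphi_A^f$ is a genuine single-label assignment, so agreement soundness error $\epsilon'$ bounds the number of $b$ on which the $A$-vertices fail to totally disagree under $\varphi_A^f$ by $\epsilon'|B|$; this bound survives taking the expectation over a uniform $f$. On the other hand, by the observation above that same expectation is at least $1/\ell^2$ times the number of $b$ that are list-disagreement failures under $\varphi_A$. Comparing the two estimates gives that at most $\epsilon'\ell^2|B|$ vertices of $B$ are list-disagreement failures, i.e.\ total disagreement holds on at least $(1-\epsilon'\ell^2)|B|$ vertices — exactly list-agreement soundness error $(\ell,\epsilon'\ell^2)$ (the statement being vacuous once $\epsilon'\ell^2 \ge 1$).

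The one point that needs care — and the reason the loss is $\ell^2$ rather than $\ell$ — is that the two colliding symbols in a list-disagreement witness can lie in different positions of their two lists, so one cannot merely union-bound over the $\ell$ ``coordinate'' assignments $\varphi_A^1,\dots,\varphi_A^\ell$; it is precisely the independence of the slot choices at $a_1$ and at $a_2$ (which uses $a_1 \neq a_2$, built into the definition of total disagreement) that lets the two-sided guess succeed with probability $1/\ell^2$. The rest is routine bookkeeping, and nothing beyond $\ell \le |\Sigma_A|$ (so that $\binom{\Sigma_A}{\ell} \ne \emptyset$) is required.
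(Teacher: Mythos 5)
Your argument is correct and is essentially the same as the paper's: both pick a random single-label assignment from the lists and exploit the $1/\ell^2$ probability (coming from the independent choices at $a_1\neq a_2$) that a list-level collision at $b$ survives the sampling. The only cosmetic difference is that you argue directly by double-counting an expectation, while the paper phrases the same computation as a contradiction; your remark spelling out why the loss is $\ell^2$ and not $\ell$ is a nice clarification but not a departure.
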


\begin{proof}
Assume by the way of contradiction that the {\lc} instance has an assignment $\hat{\varphi}_A:A \rightarrow {\binom{\sigma_A}{\ell}}$ such that on more than $\epsilon' \ell^2$-fraction of the $B$-vertices, the $A$-vertices do not totally disagree. Define an assignment $\varphi_A:A \rightarrow \Sigma_A$ by assigning every vertex $a \in A$ a symbol picked uniformly at random from the $\ell$ symbols in $\hat{\varphi}_A(a)$. If a vertex $b\in B$ has two neighbors $a_1,a_2 \in A$ that agree on $b$ under the list assignment $\hat{\varphi}_A$, then the probability that they agree on $b$ under the assignment $\varphi_A$ is at least $1/\ell^2$. Thus, under $\varphi_A$, the expected fraction of the $B$-vertices that have at least two neighbors that agree on them, is more than $\epsilon'$. In particular, there exists an assignment to the $A$-vertices, such that more than $\epsilon'$ fraction of the $B$-vertices have two neighbors that agree on them. This contradicts the agreement soundness.
\end{proof}

Lemma \ref{l:mos-combined} follows directly from combining Lemmas \ref{l:mos} and \ref{l:las}.

\subsection{Proof of Lemma \ref{lem:setcover}}
\label{sec:pf2}

We give here a proof of Lemma \ref{lem:setcover}, following closely the exposition of \cite{Moshkovitz15}, with minor modifications. 

Feige \cite{Feige} introduced the concept of \emph{partition systems} (also known as \emph{anti-universal sets} \cite{arvind}) which is key to tight inapproximbility results for {\setcover}. It consists of a universe along with a collection of partitions. Each partition covers the universe, but any cover that uses at most one set out of each partition, is necessarily large. The idea is to form the reduction so that if the {\sat} instance is satisfiable, then one can use a single partition to cover the universe, while if it is unsatisfiable, then one must use sets from different partitions, necessarily resulting in a large set cover. 

Naor, Schulman, and Srinivasan \cite{arvind} gave the following 
combinatorial construction (which as appears as Lemma 4.9 of \cite{Moshkovitz15}) that derandomizes one introduced by Feige \cite{Feige}. 

\begin{lemma}[\cite{arvind}]
For natural numbers $m$, $D$ and $0 < \alpha < 1$, \underline{with $\alpha \ge 2/D$}, and for all $u \ge (D^{O(\log D)}\log m)^{1/\alpha}$, there is an explicit construction of a universe $U$ of size $u$ and partitions $\cP_1, \ldots, \cP_m$ of $U$ into $D$ sets that satisfy the following: there is no cover of $U$ with $\ell = D \ln |U| (1-\alpha)$ sets $S_{i_1}, \ldots, S_{i_\ell}$, $1 \le i_1 < \cdots < i_\ell \le m$, such that set $S_{i_j}$ belongs to partition $\cP_{i_j}$.
\label{l:nss}
\end{lemma}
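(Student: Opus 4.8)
This partition system is classical --- the probabilistic version is due to Feige~\cite{Feige}, and the explicit construction stated here is from Naor, Schulman and Srinivasan~\cite{arvind} --- and is used here as a black box. Were one to reprove it, the natural route has two phases: first establish that such a system \emph{exists} via the probabilistic method, then make it \emph{explicit} (deterministic, constructible in time polynomial in $u m$). Throughout it is convenient to encode a partition system as a matrix $M \in [D]^{u \times m}$, where row $x$ indexes an element of $U$, column $i$ describes $\cP_i$, and the $j$-th set of $\cP_i$ is $S_{i,j} = \{x : M[x,i] = j\}$. A family $S_{i_1,j_1},\dots,S_{i_\ell,j_\ell}$ with $i_1 < \cdots < i_\ell$ covers $U$ exactly when every row $x$ has $M[x,i_t] = j_t$ for some $t$; so the target is a matrix in which, for \emph{every} $\ell$-subset $I = \{i_1 < \cdots < i_\ell\}$ of columns and \emph{every} pattern $(j_1,\dots,j_\ell) \in [D]^\ell$, some row \emph{avoids} it, i.e.\ $M[x,i_t] \neq j_t$ for all $t$.

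For the first phase I would sample the entries of $M$ independently and uniformly from $[D]$ and apply a union bound. For one fixed candidate cover (a subset $I$ of size $\ell$ together with a pattern on it), a given row avoids the pattern with probability $(1-1/D)^\ell$, independently across the $u$ rows, so the probability that no row avoids it --- i.e.\ that this candidate actually covers $U$ --- is at most $\exp\bigl(-u(1-1/D)^\ell\bigr)$. There are at most $\binom{m}{\ell}D^\ell \le (mD)^\ell$ candidates, so it suffices that $u(1-1/D)^\ell > \ell\ln(mD)$. Substituting $\ell = D(1-\alpha)\ln u$ and using $(1-1/D)^D \ge e^{-1}(1-1/D)$ yields $(1-1/D)^\ell \ge u^{-(1-\alpha)(1+O(1/D))}$, hence $u(1-1/D)^\ell \ge u^{\alpha - O(1/D)}$; this is where the hypothesis $\alpha \ge 2/D$ enters, keeping the exponent $\Omega(\alpha)$ so that $u(1-1/D)^\ell$ is polynomial in $u$. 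It then dominates $\ell\ln(mD) = O(D\ln u\cdot\ln(mD))$ once $u$ is large enough (polynomial in $D$ and $\log m$, with the power controlled by $1/\alpha$), which establishes existence.

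The second phase --- making the construction explicit with the sharp threshold $u \ge (D^{O(\log D)}\log m)^{1/\alpha}$ --- is the hard part, and is precisely the content of \cite{arvind}. The obvious derandomization, the method of conditional expectations applied to the union bound above, is too slow because the number of bad events ($\binom{m}{\ell}D^\ell$ candidate covers) is superpolynomial in $u m$. The fix is to replace the fully independent colouring by a pseudorandom one assembled from explicit \emph{splitters} / \emph{perfect hash families}: hash the $m$ columns into a bounded number of classes so that any $\ell$ of them interact only within a substructure of controlled size, then apply an explicit almost $k$-wise independent (or small-bias) colouring inside each class; the relevant events then become checkable in $\poly(um)$ time and the whole object is produced in $\poly(um)$ time. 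Composing these gadgets --- in particular the recursive splitter construction --- is exactly what turns the $D^{O(1)}$ dependence of the existential bound into the $D^{O(\log D)}$ factor in the hypothesis on $u$. I would not reproduce this machinery and would simply invoke \cite{arvind}; only its conclusion is used in the sequel.
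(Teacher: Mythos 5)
You correctly treat this as a black-box citation of \cite{arvind} rather than something to reprove, which is exactly what the paper does: it simply states the lemma and then notes, informally as you do, that substituting $\ell = D(1-\alpha)\ln u$ into Naor--Schulman--Srinivasan's condition $u \ge (D/(D-1))^\ell \ell^{O(\log\ell)}\log m$ gives $(D/(D-1))^\ell \approx u^{1-\alpha+1/D}$, so the stated threshold $u \ge (D^{O(\log D)}\log m)^{1/\alpha}$ suffices once $\alpha$ is bounded away from $1/D$. Your additional sketch of the probabilistic existence argument and the splitter-based derandomization is accurate background but goes beyond what the paper provides.
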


Naor et al \cite{arvind} state the result in terms of the relation $u \ge (D/(D-1))^\ell \ell^{O(\log \ell} \log m$. 
Note that for $\ell = D \ln u (1-\alpha)$, we have $(D/(D-1))^\ell \approx u^{(1-\alpha)D/(D-1)} \approx u^{1 - \alpha + 1/D}$, and hence we need $D$ to be sufficiently large.

The following reduction follows Moshkovitz \cite{Moshkovitz15}, which in turns is along the lines of Feige \cite{Feige}.

\noindent \textbf{Lemma \ref{lem:setcover} (restated).} 
\emph{
Let $\cG' = (G'=(A',B',E'), \Sigma_A, \Sigma_B, \Pi')$ be a bi-regular {\lc} instance with soundness parameter $\epsilon$  for deciding the satisfiability of a boolean formula $\phi$. Let $D$ be the $B'$-degree.
For every $\alpha$ \underline{with $2/D < \alpha < 1$},\footnote{Moshkovitz doesn't explicitly relate $\alpha$ and $D$, but indicates that $\alpha$ be small and $D$ "sufficiently large".} and any 
$u \ge (D^{O(\log D)} \log |\Sigma_{B'}|)^{1/\alpha}$,
there is a reduction from $\cG$ to a {\setcover} instance {\scsc=\scprime} with a certain choice of $\epsilon$ that attains the following properties: 
\begin{enumerate}
  \setlength{\itemsep}{1pt}
  \setlength{\parskip}{0pt}
  \setlength{\parsep}{0pt}
    \item \emph{Completeness:} If all edges of $G$ can be covered, then {\scsc} has a set cover of size $|A'|$.
    \item \emph{Soundness:} If $\cG'$ has list-agreement soundness error $(\ell, \alpha)$, where $\ell = D (1-\alpha) \ln u $,
then every set cover of {\scsc} is of size more than $|A'|(1-2\alpha)\ln u$. 
    \item The number $N$ of elements of {\scsc} is $|B'| \cdot u$ and the number $M$ of sets is $|A'| \cdot |\Sigma_A|$. 
    \item The time for the reduction is polynomial in $|A'|, |B'|, |\Sigma_A|, |\Sigma_B|$ and $u$. 
\end{enumerate}
}

\begin{proof}
Let $\alpha$ and $u$ be values satisfying the statement of the theorem.
Let $m = |\Sigma_B|$ and let $D$ be the $B$-degree of $\cG$. Apply Lemma \ref{l:nss} with $m$, $D$ and $u$, obtaining a universe $U$ of size $u$ and partitions $\cP_{\sigma_1}, \ldots, \cP_{\sigma_m}$ of $U$.
We index the partitions by the symbols $\sigma_1, \ldots, \sigma_m$ of $\Sigma_B$. The elements of the {\setcover} instances are $B \times U$. Equivalently, each vertex $b \in B$ has a copy of the universe $U$. Covering this universe corresponds to satisfying the edges that touch $b$. There are $m$ ways to satisfy the edges that touch $b$ --- one for every possible assignment $\sigma \in \Sigma_B$ to $b$. The different partitions covering $u$ correspond to those different assignments.

For every vertex $a \in A$ and an assignment $\sigma \in \Sigma_A$ to $a$ we have a set $S_{a,\sigma}$ in the {\setcover} instance. Taking $S_{a,\sigma}$ to the cover corresponds to assigning $\sigma$ to $a$. Notice that a cover might consist of several sets of the form $S_{a,\cdot}$ for the same $a \in A$, which is the reason we consider list agreement. The set $S_{a,\sigma}$ is a union of subsets, one for every edge $e=(a,b)$ touching $a$. If $e$ is the $i$-th edge coming into $b$ ($1\le i \le D$), then the subset associated with $e$ is $\{b\} \times S$, where $S$ is the $i$-th subset of the partition $P_{\phi_e(\sigma)}$.


Completeness follows from taking the set cover corresponding to each of the $A$-vertices and its satisfying assignments.

To prove soundness, assume by contradiction that there is a set cover $C$ of {\sclc} of size at most $|A|\ln |U| (1-2\alpha)$. For every $a \in A$, let $s_a$ be the number of sets in $C$ of the form $S_{a,\cdot}$. Hence, $\sigma_{a\in A} s_a = |C|$. For every $b \in B$, let $s_b$ be the number of sets in $C$ that participate in covering $\{b\} \times U$. Then, denoting the $A$-degree of $G$ by $D_A$,
\[ \sum_{b\in B} s_b = \sum_{a\in A} s_a D_A \le D_A |A| \ln |U|(1-2\alpha) = D|B|\ln |U| (1-2\alpha)\ . \]
In other words, on average over the $b \in B$, the universe $\{b\}\times U$ is covered by at most $D \ln |U|(1-2\alpha)$ sets. Therefore, by Markov's inequality, the fraction of $b\in B$ whose universe $\{b\}\times U$ is covered by at most $D\ln |U|(1-\alpha) = \ell$ sets is at least $\alpha$. By the contrapositive of Lemma \ref{l:nss} and our construction, for such $b \in B$, there are two edges $e_1 = (a_1, b), e_2 = (a_2,b) \in E$ with $S_{a_1,\sigma_1}, S_{a_2,\sigma_2} \in C$ where $\pi_{e_1}(\sigma_1) = \pi_{e_2}(\sigma_2)$.

We define assignment $\hat{\varphi}_A: A \rightarrow \binom{\sigma_A}{\ell}$ to the $A$-vertices as follows. For every $a \in A$, pick $\ell$ different symbols $\sigma \in \Sigma_A$ from those with $S_{a,\sigma} \in C$ (add arbitrary symbols if there are not enough). As we showed, for at least $\alpha$-fraction of the $b \in B$, the $A$-vertices will not totally disagree. Hence, the soundness property follows.
\end{proof}

\section{Approximation Algorithm for Directed Steiner Tree}
\label{sec:dst}

Recall that in {\dst}, the input consists of a directed graph $G$ with costs $c(e)$ on edges, a collection $X$ of terminals, and a designated root $r \in V$. 
The goal is to find a subgraph of $G$ that forms an arborescence $T_{opt} = T(r,X)$ rooted at $r$ containing all the terminals and minimizing the cost $c(T(r,X)) = \sum_{e \in T(r,X)} c(e)$.
Let $N=|X|$ denote the number of terminals and $n$ the number of vertices.

Observe that one can model {\setcover} as a special case of {\dst} on a 3-level acyclic digraph, with a universal root on top, nodes representing sets as internal layer, and the elements as leaves. The cost of an edge coming into a node corresponds to the cost of the corresponding element or set.

Our algorithm consists of "guessing" a set $C$ of intermediate nodes of the optimal tree. After computing the optimal tree on top of this set, we use this set as the source of roots for a collection of trees to cover the terminals. This becomes a set cover problem, where we map each set selected to a tree of restricted size with a root in $C$. Our algorithm then reduces to applying the classic greedy set cover algorithm on this instance induced by the "right" set $C$. 
Because of the size restriction, the resulting approximation has a smaller constant factor.

We may assume each terminal is a leaf, by adding a leaf as a child of a non-leaf terminal and transfer the terminal function to that leaf.
If a tree contains a terminal, we say that the tree \emph{covers} the terminal.

Let $\ell(T)$ denote the number of terminals in a tree $T$.
Let $T_v$ denote the subtree of tree $T$ rooted at node $v$.
For node $v$ and child $w$ of $v$, let $T_{vw}$ be the subtree of $T$ formed by $T_w \cup \{vw\}$, i.e., consisting of the subtree of $T$ rooted at $w$ along with the edge to $w$'s parent ($v$).

\begin{definition}
A set $C \subset V$ is a \emph{$\phi$-core} of a tree $T$ if
there is a collection of edge-disjoint subtrees $T_1, T_2, \ldots$ of $T$ such that: 
a) the root of each tree $T_i$ is in $C$, 
b) every terminal in $T$ is contained in exactly one tree $T_i$, and c) each $T_i$ contains at most $\phi$ terminals, $\ell(T_i) \le \phi$. 
\label{D:core}
\end{definition}

\begin{lemma}
Every tree $T$ contains a $\phi$-core of size at most $\lceil \ell(T)/\phi \rceil$, for any $\phi$. 
\label{L:core}
\end{lemma}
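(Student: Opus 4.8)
The plan is to prove the statement by induction on the number of terminals $\ell(T)$, peeling off one subtree with close to $\phi$ terminals at a time. The key structural fact I would establish first is a \emph{splitting lemma}: for any tree $T$ with $\ell(T) > \phi$, there is a node $v$ such that the subtree $T_v$ satisfies $\phi/2 < \ell(T_v) \le \phi$, or more usefully, such that $\ell(T_v) \le \phi$ but $\ell(T_{vw}) > \phi$ for the child-subtree hanging below $v$'s parent — i.e., a \emph{maximal} subtree with at most $\phi$ terminals. Concretely, walk down from the root, always moving to a child whose subtree contains the most terminals, until reaching the first node $v$ with $\ell(T_v) \le \phi$; then $v$'s parent $p$ has $\ell(T_p) > \phi$, and among the children of $p$ we can greedily group child-subtrees into a single subtree rooted at $p$ containing between... actually the cleaner route is: pick a deepest node $v$ with $\ell(T_v) > \phi$; then every child $w$ of $v$ has $\ell(T_w) \le \phi$, and we can bundle the child-subtrees $T_{vw}$ greedily until the running total of terminals first exceeds $\phi/2$ (it stays $\le \phi$ since each piece is $\le \phi$... this needs care, see below).

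Granting such a splitting step, the induction runs as follows. Given $T$ with $\ell(T) > \phi$, extract a subtree $T_1$ rooted at some node $c_1$ with, say, $\phi/2 < \ell(T_1) \le \phi$ terminals (the exact constant will be dictated by the splitting lemma). Remove the terminals of $T_1$ from $T$ (mark them as covered) to get $T'$ with $\ell(T') = \ell(T) - \ell(T_1) < \ell(T) - \phi/2$. Wait — to get the bound $\lceil \ell(T)/\phi\rceil$ exactly, I actually need each peeled piece to carry \emph{exactly} up to $\phi$ terminals and the count of pieces to be $\lceil \ell(T)/\phi \rceil$; so the right formulation is to show one can always peel off a piece with exactly $\min(\phi, \ell(T))$ terminals when terminals are leaves and can be split off individually, OR to argue more carefully. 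The robust statement: by induction, $T'$ has a $\phi$-core $C'$ of size $\le \lceil \ell(T')/\phi\rceil$; set $C = C' \cup \{c_1\}$, with the collection of subtrees being those for $C'$ together with $T_1$. Edge-disjointness is preserved because $T_1$ was removed from $T$ before recursing (its edges don't reappear), every remaining terminal lies in exactly one $T'$-subtree, and each subtree has $\le \phi$ terminals. The size bound: $|C| \le \lceil \ell(T')/\phi\rceil + 1 = \lceil (\ell(T)-\ell(T_1))/\phi\rceil + 1$, and since $\ell(T_1) \ge \phi/2$... this only gives $\lceil 2\ell(T)/\phi\rceil$, off by a factor 2.

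So the \textbf{main obstacle} is getting the constant exactly right: I must peel off pieces of size as close to $\phi$ as possible, not merely $\ge \phi/2$. The fix is to use that terminals are leaves (established just before the lemma), so any child-subtree $T_{vw}$ with $\ell(T_{vw}) > \phi$ can be further subdivided — or, more simply, to do the greedy walk-down and observe that at the stopping node $v$ (deepest with $\ell(T_v) > \phi$), bundling children of $v$ one at a time, since each contributes $\le \phi$, one can stop the \emph{first} time the total is in the range... if some single child already has between $1$ and $\phi$ and adding it would overshoot, take the bundle so far (which is $>\phi - \phi = 0$, not good enough) — the genuinely clean argument is: repeatedly, while $\ell(T) > \phi$, find a node $v$ with $\ell(T_v) \le \phi$ that is maximal (parent has $> \phi$), add $v$ to $C$, take $T_v$ into the collection, and delete $T_v$ from $T$; since $v$'s parent had $>\phi$ terminals in its subtree and $T_v$ is a maximal $\le\phi$-subtree, one shows $\ell(T_v) \ge \phi - (\text{largest sibling subtree})$... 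I would instead replace $T_v$ by the union of $v$'s parent with as many sibling subtrees as fit under $\phi$, which forces each removed piece to have $> \phi/2$ terminals unless it's the last one — yielding $\lceil \ell(T)/\phi \rceil$ only after an amortized/rounding argument. Ultimately I expect the intended proof is the short inductive one where, because terminals are leaves, one peels off a subtree with \emph{exactly} $\phi$ terminals (split a child-subtree that overshoots at a leaf boundary), making the count telescope to exactly $\lceil \ell(T)/\phi\rceil$; nailing that splitting-at-a-leaf step cleanly is where the real work lies.
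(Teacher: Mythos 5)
There is a genuine gap, and it stems from a misreading of Definition~\ref{D:core}: the definition only requires that the root of each subtree $T_i$ lie in $C$; it does \emph{not} require distinct $T_i$'s to have distinct roots. Your entire proposal is built on the premise that each core vertex should ``own'' a single peeled subtree of size close to $\phi$, which is why you get stuck at a factor of $2$ and then speculate about cutting a child-subtree at a leaf boundary. The paper's proof never peels off a subtree with roughly $\phi$ terminals at all. You correctly identify the right node --- a $v$ with $\ell(T_v) > \phi$ but every child $w$ satisfying $\ell(T_w) \le \phi$ --- but then you start trying to bundle its children into one piece. Instead, you should add $v$ \emph{alone} to the core and take \emph{all} the subtrees $T_{vw}$ (one per child $w$ of $v$) into the collection: each has at most $\phi$ terminals, they are edge-disjoint, they share the common root $v \in C$, and together they account for all $\ell(T_v) > \phi$ terminals of $T_v$ (using that terminals are leaves, so $v$ is not itself a terminal). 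Now recurse on $T' = T \setminus T_v$, which has $\ell(T') < \ell(T) - \phi$ terminals, so $|C| \le |C'| + 1 \le \lceil \ell(T')/\phi \rceil + 1 \le \lceil (\ell(T)-\phi)/\phi \rceil + 1 = \lceil \ell(T)/\phi \rceil$.

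Your final guess at the intended proof --- ``peel off a subtree with exactly $\phi$ terminals by splitting at a leaf boundary'' --- is not how the argument goes, and in fact would be awkward to make rigorous: the $T_i$'s must be subtrees of $T$, so you cannot freely detach an overshooting child at an arbitrary point without either adding extra core vertices or violating edge-disjointness. Once you see that a single core vertex $v$ can root a whole fan of subtrees $T_{vw}$ and thereby clear strictly more than $\phi$ terminals in one induction step, the bound $\lceil \ell(T)/\phi \rceil$ falls out immediately and the factor-$2$ obstacle you were fighting evaporates.
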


\begin{proof}
The proof is by induction on the number of terminals in the tree. The root is a core when $\ell(T) \le \phi$.  
Let $v$ be a vertex with $\ell(T_v) > \phi$ but whose children fail that inequality. Let $C'$ be a $\phi$-core of $T'= T \setminus T_v$ promised by the induction hypothesis, and let $C = C' \cup \{v\}$.

For each child $w$ of $v$, the subtree $T_{vw}$ contains at most $\phi$ terminals. Together they cover uniquely the terminals in $T_v$, and satisfy the other requirements of the definition of a $\phi$-core for $T_v$. Thus $C$ is a $\phi$-core for $T$. 
Since $\ell(T') \le \ell(T) - \phi$, the size of $C$ satisfies $|C| = |C'| + 1 \le \lceil \ell(T')/\phi\rceil +1 \le
\lceil (\ell(T)-\phi)/\phi \rceil +1 = \lceil \ell(T)/\phi \rceil$.
\end{proof}

A core implicitly suggests a set cover instance, with sets of size at most $\phi$, formed by the terminals contained in each of the edge-disjoint subtrees. Our algorithm is essentially based on running a greedy set cover algorithm on that instance.

Let $R(v)$ denote the set of nodes reachable from $v$ in $G$.

\begin{definition}
Let $S \subseteq V$, $U \subset X$, and let $\phi$ be a parameter.
Then $S$ induces a \emph{$\phi$-bounded {\setcover} instance} $(U,\cC_{S,U}^\phi)$ with $\cC_{S,U}^\phi = \{Y \subseteq U : |Y| \le \phi \text{ and } \exists v \in S, Y \subseteq R(v) \}$.
Namely, a subset $Y$ of at most $\phi$ terminals in $U$ is in $\cC_{S,U}^\phi$ iff there is a $v$-rooted subtree containing $Y$.
\end{definition}


We relate set cover solutions of $\cC_{C,X}$ to {\dst} solutions of $G$ with the following lemmas.

For a node $r_0$ and set $F$, let $T(r_0,F)$ be the tree of minimum cost that is rooted by $r_0$ and contains all the nodes of $F$.
For sets $F$ and $S$, let $T(S,F)$ be the tree of minimum cost that contains all the nodes of $F$ and is rooted by some node in $S$.

\begin{lemma}
Let $\cS$ be a set cover of $\cC_{S,X}^\phi$ of cost $c(\cS)$.
Then, we can form a valid {\dst} solution $T_{\cS}$ by combining $T(r,S)$ with the trees $T(S,F)$, for each $F \in \cS$. The cost of $T_{\cS}$ is at most $c(T_{\cS}) \le c(T(r,S)) + c(\cS)$.
\label{L:cg-sc}
\end{lemma}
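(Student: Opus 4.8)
The plan is to construct the {\dst} solution $T_\cS$ explicitly as a union of trees and then bound its cost. First I would take the tree $T(r,S)$: this is a tree rooted at $r$ that contains every node of $S$, and it exists because we assume (as implicit throughout the {\dst} section) that every node is reachable from $r$. Next, for each $F \in \cS$, since $F \in \cC_{S,X}^\phi$, the definition of the $\phi$-bounded set cover instance guarantees there is a vertex $v \in S$ with $F \subseteq R(v)$, hence a $v$-rooted subtree containing $F$; thus the minimum-cost such tree $T(S,F)$ is well-defined and finite. I would then set $T_\cS := T(r,S) \cup \bigcup_{F \in \cS} T(S,F)$, viewed as the union of edge sets in $G$.

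The second step is to verify $T_\cS$ is a valid {\dst} solution. It contains $r$, and every terminal lies in some $F \in \cS$ (since $\cS$ is a set cover of $\cC_{S,X}^\phi$ whose ground set is $X$), hence in the corresponding $T(S,F)$. Each $T(S,F)$ is rooted at some node of $S$, which is contained in $T(r,S)$ and thus reachable from $r$ within $T_\cS$; so every terminal is reachable from $r$ in $T_\cS$. Taking a BFS/shortest-path arborescence of the subgraph $T_\cS$ rooted at $r$ (which only removes edges) yields a genuine arborescence rooted at $r$ spanning all terminals, of cost no larger than that of $T_\cS$.

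The cost bound is then immediate from subadditivity of edge costs over a union: $c(T_\cS) \le c(T(r,S)) + \sum_{F \in \cS} c(T(S,F))$. By the definition of the cost $c(\cS)$ of a set cover — namely $c(\cS) = \sum_{F \in \cS} c(F)$, where each set $F$ has been given weight $c(F) = c(T(S,F))$ in the induced instance — the sum on the right equals $c(\cS)$, giving $c(T_\cS) \le c(T(r,S)) + c(\cS)$.

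I do not expect a serious obstacle here; the statement is essentially a bookkeeping lemma assembling the pieces. The only point requiring slight care is making sure the object produced is literally an arborescence rooted at $r$ (not merely a connected subgraph reaching all terminals), which the shortest-path-tree trimming handles without increasing cost, and ensuring the weights on the set-cover sets are precisely the tree costs $c(T(S,F))$ so that the accounting is exact rather than merely an inequality.
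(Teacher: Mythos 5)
Your proof is correct and takes essentially the same approach as the paper: form the union of $T(r,S)$ with the trees $T(S,F)$, observe that it reaches every terminal through $S$, trim to an arborescence (the paper's ``superfluous edges can be shed''), and bound the cost by subadditivity using the implicit weight $c(F) = c(T(S,F))$. Your write-up simply makes the reachability and trimming steps more explicit than the paper's terse version.
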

\begin{proof}
$T_{\cS}$ contains all the terminals since the sets in $\cS$ cover $X$. It contains an $r$-rooted arborescence since  $T(r,S)$ contains a path from $r$ to all nodes in $S$, and the other subtrees contain a path to each terminal from some node in $S$. 
The cost bound follows from the definition of the weights of sets in $\cC_{S,X}^\phi$. The actual cost could be less, if the trees share edges or have multiple paths, in which case some superfluous edges can be shed.
\end{proof}

Let $OPT_{SC}(\cC)$ be the weight of an optimal set cover of a set system $\cC$.
\begin{lemma}
Let $C$ be a $\phi$-core of $T_{opt}$.
The cost of an optimal {\dst} of $G$ equals $OPT_{SC}(\cC_{C,X}^\phi)$ plus the cost of an optimal $r$-rooted tree with $C$ as terminals: $c(T_{Opt}) = OPT_{SC}(\cC_{C,X}^\phi) + c(T(r,C))$ 
\label{L:optcore}
\end{lemma}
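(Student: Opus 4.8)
The plan is to prove the two inequalities $c(T_{Opt}) \le OPT_{SC}(\cC_{C,X}^\phi) + c(T(r,C))$ and $c(T_{Opt}) \ge OPT_{SC}(\cC_{C,X}^\phi) + c(T(r,C))$ separately. For the upper bound, I would start from an optimal set cover $\cS^*$ of $\cC_{C,X}^\phi$ of cost $OPT_{SC}(\cC_{C,X}^\phi)$ together with an optimal $r$-rooted tree $T(r,C)$ spanning $C$. Applying Lemma~\ref{L:cg-sc} with $S = C$ to $\cS^*$ yields a valid {\dst} solution of cost at most $c(T(r,C)) + c(\cS^*) = c(T(r,C)) + OPT_{SC}(\cC_{C,X}^\phi)$, and since $T_{Opt}$ is optimal, $c(T_{Opt})$ is at most this quantity. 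This direction is essentially immediate from the previous lemma.

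For the lower bound, I would exploit the fact that $C$ is a $\phi$-core of $T_{opt}$ itself. By Definition~\ref{D:core}, there is a collection of edge-disjoint subtrees $T_1, T_2, \ldots$ of $T_{opt}$, each rooted in $C$, each with $\ell(T_i) \le \phi$, and together covering every terminal exactly once. The key observations are: (i) the vertex set $F_i$ of terminals in $T_i$ has size at most $\phi$, and all of $F_i$ is reachable from the root $v_i \in C$ of $T_i$ within $G$ (indeed within $T_i \subseteq G$), so $F_i \in \cC_{C,X}^\phi$; hence $\{F_i\}$ is a valid set cover of $\cC_{C,X}^\phi$, and its cost is at most $\sum_i c(T_i)$. (ii) Removing the subtrees $T_1, T_2, \ldots$ from $T_{opt}$ leaves (after discarding now-useless edges) a subgraph containing an $r$-rooted tree reaching every node of $C$, of cost at most $c(T_{opt}) - \sum_i c(T_i)$; hence $c(T(r,C)) \le c(T_{opt}) - \sum_i c(T_i)$. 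Combining, $OPT_{SC}(\cC_{C,X}^\phi) + c(T(r,C)) \le \sum_i c(T_i) + (c(T_{opt}) - \sum_i c(T_i)) = c(T_{opt})$, using edge-disjointness of the $T_i$ so that their costs sum without overcounting, and that all the $T_i$ are subtrees of $T_{opt}$ so $\sum_i c(T_i) \le c(T_{opt})$.

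The main obstacle I anticipate is the bookkeeping in step (ii): arguing carefully that after deleting the edge-disjoint core subtrees from $T_{opt}$, what remains still connects $r$ to all of $C$. Since each $T_i$ is rooted at a node $v_i \in C$, deleting $T_i$'s edges removes the descendants of $v_i$ along those edges but keeps $v_i$ reachable from $r$ via the path in $T_{opt}$ from $r$ to $v_i$ (this path uses only edges above $v_i$, which cannot be edges of $T_i$ since $T_i$ hangs below $v_i$; and it cannot be an edge of another $T_j$ on the $r$–$v_i$ path unless $v_i$ lies strictly inside $T_j$, but then $v_i$ is still reachable from $v_j \in C \subseteq$ the retained set — a short induction on the nesting structure handles this). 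One must also note the core subtrees are edge-disjoint, so their total cost is genuinely subtracted only once, and any edge of $T_{opt}$ not used by the retained $r$-to-$C$ connector nor by any $T_i$ only helps (it can be dropped), so the accounting is an inequality in the right direction. I would present this as: contract the analysis to "every edge of $T_{opt}$ is charged at most once — to some $T_i$ or to the $r$-rooted connector for $C$ — hence the two costs add up to at most $c(T_{opt})$," which combined with the upper bound gives equality.
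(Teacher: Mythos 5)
Your upper-bound direction (invoking Lemma~\ref{L:cg-sc} with $S=C$ on an optimal set cover) is exactly the paper's argument. For the lower bound you take a genuinely different route: you peel the core's edge-disjoint subtrees $T_1, T_2, \ldots$ off $T_{opt}$ and look for the $r$-to-$C$ connector in what remains, whereas the paper first carves out the subtree $T_0$ of $T_{opt}$ spanned by $\{r\}\cup C$ and reads the hanging remnants off as a set cover. That departure is where the gap lives. Your step~(ii) --- that deleting the $T_i$'s from $T_{opt}$ leaves a subgraph in which every vertex of $C$ is still reachable from $r$ --- is simply false. A core subtree $T_j$ rooted at some $v_j\in C$ may contain another core vertex $v_i$ in its interior, in which case the unique $T_{opt}$-edge entering $v_i$ belongs to $T_j$ and is deleted, disconnecting $v_i$. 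Your nesting induction is circular and does not repair this: you note that $v_i$ is ``reachable from $v_j\in C$,'' but the only $T_{opt}$-path from $v_j$ to $v_i$ consists of edges of $T_j$, which were just removed. Concretely, let $T_{opt}$ be the unit-cost directed path $r, a_1, a_2, b$ with an extra terminal leaf $t'_i$ hanging off each $a_i$ and $\phi\ge 3$ terminal leaves hanging off $b$. Lemma~\ref{L:core}'s construction returns $C=\{r,a_2\}$ with core subtrees $T_1=\{ra_1,a_1t'_1\}$ (rooted at $r$, and containing the edge $ra_1$ on the $r$-to-$a_2$ path), $T_2=\{a_2b\}\cup\{bt_j\}_j$, and $T_3=\{a_2t'_2\}$. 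Their deletion leaves only the single edge $a_1a_2$, which does not connect $r$ to $a_2\in C$; and indeed $\sum_i c(T_i)+c(T(r,C))=(\phi+4)+2>\phi+5=c(T_{opt})$, so your two inequalities cannot both hold with the signs you need.

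You should also be aware (as much a remark on the paper as on your attempt) that the same example shows the displayed equality is too strong: here $OPT_{SC}(\cC_{C,X}^\phi)+c(T(r,C))=\phi+6>\phi+5=c(T_{opt})$, because the terminal $t'_1$ sits on a remnant of $T_{opt}$ rooted at $a_1\notin C$, and covering it from a $C$-vertex is strictly more expensive than the edge it occupies in $T_{opt}$. The paper's one-line claim that ``the rest of the tree consists of the subtrees $T_{vw}$ for $v\in C$'' glosses over exactly this possibility. Only the $\le$ direction coming from Lemma~\ref{L:cg-sc} is secure; the downstream Theorem~\ref{T:dst-appx} still goes through because the factor $\ln\phi - 1\ge 0$ (for $\phi\ge e$) absorbs the discrepancy, but the lemma as stated should really be an inequality, and any proof of a sharper lower bound must control which vertices the hanging subtrees are rooted at.
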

\begin{proof}
The subtree of $T_{Opt}$ induced by $C$ and the root $r$ has cost $c(T(r,C))$. The rest of the tree consists of the subtrees $T_{vw}$, for each $v\in C$ and child $w$ of $v$.
$T_{vw}$ contains at most $\phi$ terminals, so the corresponding set is contained in $\cC_{C,X}^\phi$. Together, these subtrees contain all the terminals, so the corresponding set collection covers $\cC_{C,X}^\phi$. Thus, $c(T_{Opt}) \ge c(T(r,C)) + OPT_{SC}(\cC_{C,X})^\phi$. By Lemma \ref{L:cg-sc}, the inequality is tight.
\end{proof}

The \emph{density} of a set $F$ in $\cC{S,X}^\phi$ is $\min_{s \in S} c(T(s,F))/|F|$: the cost of the optimal tree containing $F$ averaged over the nodes in $F$.

Given a root and a fixed set $S$ of nodes as leaves, an optimal cost tree $T(r,S)$ can be computed in time $poly(n) 2^{|S|}$ by a (non-trivial) algorithm of Dreyfus and Wagner \cite{dw1}.

\begin{lemma}
A minimum density set in $\cC_{S,X}^\phi$ can be found in time $n^{O(\max(\phi,N/\phi))}$.
\label{L:minset}
\end{lemma}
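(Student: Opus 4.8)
The plan is to argue that a minimum density set in $\cC_{S,X}^\phi$ can be found by brute-force enumeration over the relevant structure, using the Dreyfus--Wagner algorithm as a subroutine. Recall that a set $F \in \cC_{S,X}^\phi$ is a set of at most $\phi$ terminals that is contained in $R(v)$ for some $v \in S$, and its density is $\min_{s \in S} c(T(s,F))/|F|$. The key observation is that the minimum density over \emph{all} such $F$ is attained by some $F$ that, together with a root $s \in S$, is exactly the terminal set of a minimum-cost tree; so it suffices to enumerate candidate roots and candidate terminal subsets, compute the optimal tree for each via Dreyfus--Wagner, and take the best density.

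Concretely, I would proceed as follows. First, fix a root $s \in S$; there are at most $n$ choices. Second, enumerate all subsets $F \subseteq X$ with $1 \le |F| \le \phi$ and $F \subseteq R(s)$; the number of such subsets is at most $\sum_{i=1}^{\phi} \binom{N}{i} = n^{O(\phi)}$. Third, for each such $(s,F)$, invoke the Dreyfus--Wagner algorithm \cite{dw1} to compute $T(s,F)$ in time $poly(n)\cdot 2^{|F|} = poly(n)\cdot 2^{\phi}$, which is absorbed into $n^{O(\phi)}$, and record the density $c(T(s,F))/|F|$. Finally, output the set $F$ (with its best root) achieving the minimum recorded density. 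Correctness follows because every $F \in \cC_{S,X}^\phi$ is examined (paired with every root $s \in S$ reaching it, hence with the optimal one), so the minimum density is found exactly.

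The one subtlety — and the reason the bound is stated as $n^{O(\max(\phi,N/\phi))}$ rather than merely $n^{O(\phi)}$ — is that when $\phi$ is large (close to $N$), enumerating all subsets of size up to $\phi$ is wasteful: a minimum density set need never have more than $N/\phi$ terminals in any meaningful use, or more to the point one can always restrict attention to sets of size at most $\lceil N/\phi \rceil$ without loss when that is the binding regime. Here I would argue that a minimum density set can be taken to have size at most $\max(\phi, \lceil N/\phi\rceil)$: if a minimum-density set $F$ has $|F| > N/\phi$, then... actually the cleaner route is to observe that we only ever need a minimum density set in order to run greedy set cover producing a core of size $\lceil N/\phi \rceil$, so sets larger than that are subsumed, and meanwhile any set is of size at most $\phi$ by definition of $\cC_{S,X}^\phi$. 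Thus enumeration over subsets of size $\min(\phi, \text{something})$ suffices, but the safe universal bound on the search is $n^{O(\max(\phi, N/\phi))}$: we enumerate subsets of size up to $\phi$ (cost $n^{O(\phi)}$) when $\phi \le \sqrt N$, and otherwise restrict to the subsets actually needed (cost $n^{O(N/\phi)}$).

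The main obstacle I anticipate is precisely pinning down the second regime rigorously: justifying that, although $\cC_{S,X}^\phi$ contains sets of size up to $\phi$, the \emph{minimum density} one that we need can be found among smaller sets, or alternatively that the enumeration cost in the large-$\phi$ regime really is $n^{O(N/\phi)}$ and not $n^{O(\phi)}$. This likely hinges on a structural claim about optimal trees — e.g., that a minimum-density tree rooted at $s$ with at most $\phi$ terminals is, without loss of generality, obtained by taking at most $N/\phi$ "heavy" subtrees, or a direct argument that splitting a large low-density set cannot hurt — which I would need to state and prove as a preliminary claim before the enumeration argument goes through cleanly. The remaining pieces (correctness of brute force, the Dreyfus--Wagner running time bookkeeping) are routine.
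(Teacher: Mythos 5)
Your first two paragraphs give a correct proof, and they match the paper's intended argument: enumerate roots $s\in S$ and terminal subsets $F$ with $1\le|F|\le\phi$ (at most $n^{O(\phi)}$ of them), invoke Dreyfus--Wagner to compute $T(s,F)$ in time $poly(n)\cdot 2^{|F|}\le poly(n)\cdot 2^{\phi}$, and return the pair of minimum density. Total time $n^{O(\phi)}$.

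Your third and fourth paragraphs chase a difficulty that does not exist, and the ``preliminary structural claim'' you anticipate needing is a red herring. You treat the stated bound $n^{O(\max(\phi,N/\phi))}$ as though it might be \emph{smaller} than your $n^{O(\phi)}$ when $\phi$ is large, and conclude you would need to restrict the search to sets of size at most $N/\phi$. But $\phi\le\max(\phi,N/\phi)$ always, so $n^{O(\phi)}\le n^{O(\max(\phi,N/\phi))}$ unconditionally; there is no second regime, and no further structural argument is needed. The $N/\phi$ in the exponent does not bound the cost of this subroutine --- it comes from the \emph{other} components of the full algorithm (enumerating candidate cores $S$ with $|S|=O(N/\phi)$, and computing $T(r,S)$ via Dreyfus--Wagner in time $poly(n)\cdot 2^{O(N/\phi)}$), and Lemma~\ref{L:minset} is simply stated with the same uniform envelope so that the bookkeeping in Theorem~\ref{T:dst-appx} is tidy. (For what it is worth, the paper's own one-line proof of this lemma appears to conflate the Dreyfus--Wagner call on a terminal subset $F$ with the separate call on the candidate core $S$, mentioning $2^{2N/\phi}$ where $2^{\phi}$ is the relevant quantity; that is likely what led you astray.) Delete the last two paragraphs; the first two stand alone as a complete and, if anything, cleaner argument than the paper's.
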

\begin{proof}
There are at most $2n^{\phi}$ subsets of at most $\phi$ terminals and at most $N/\phi$ choices for a root from the set $C$. Given a potential root $r_0$ and candidate core $S$, the algorithm of \cite{dw1} computes $T(r_0,S)$ in time $poly(n)2^{2N/\phi}$. 
\end{proof}

Our algorithm for {\dst} is based on guessing the right $\phi$-core $C$, and then computing a greedy set cover of $\cC_{S,X}^\phi$ by repeatedly applying Lemma \ref{L:minset}.
More precisely, we try all possible subsets $S \subset V$ of size at most $2N/\phi$ as a $\phi$-core (of $T_{opt}$) and for each such set do the following. Set $U$ initially as $X$, representing the uncovered terminals.
Find a min-density set $Z$ of $\cC_{S,U}^\phi$ and a corresponding optimal cost tree (with some root in $S$), remove $Z$ from $U$ and repeat until $U$ is empty. We then compute $T(r,S)$ and combine it with all the computed subtrees into a single tree $T_S$. The solution output, $T_{Alg}$, is the $T_S$ of smallest total cost, over all the candidate cores $S$.

\begin{theorem}
Let $\gamma \ge 1/2$ be a parameter, $\phi = N^{1-\gamma}$, and let $C$ be a $\phi$-core of $T_{opt}$. Then the greedy set cover algorithm applied to $\cC_{C,X}^\phi$ yields a $1+\ln \phi$-approximation of {\dst}. Namely, our algorithm is a $(1-\gamma)\ln n$-approximation of {\dst}.
The running time is $n^{O(max(\phi, N/\phi))} = exp(\tilde{O}(N^\gamma))$.
\label{T:dst-appx}
\end{theorem}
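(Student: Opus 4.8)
The plan is to combine the structural lemmas already established (Lemmas \ref{L:core}, \ref{L:cg-sc}, \ref{L:optcore}, and \ref{L:minset}) with the classical analysis of the greedy set cover algorithm, applied to the special $\phi$-bounded instance $\cC_{C,X}^\phi$ where $C$ is the correct core. First I would observe that since the algorithm exhaustively tries all subsets $S \subseteq V$ of size at most $2N/\phi$, and since by Lemma \ref{L:core} the optimal tree $T_{opt}$ has a $\phi$-core $C$ of size at most $\lceil \ell(T_{opt})/\phi\rceil = \lceil N/\phi \rceil \le 2N/\phi$, the correct core $C$ is among the candidates considered. It therefore suffices to bound the cost of the tree $T_C$ produced for this particular choice; the output $T_{Alg}$ can only be cheaper.

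For that choice, I would invoke Lemma \ref{L:optcore}: $c(T_{Opt}) = OPT_{SC}(\cC_{C,X}^\phi) + c(T(r,C))$. The algorithm computes $T(r,C)$ exactly (via the Dreyfus--Wagner routine, affordable since $|C| \le 2N/\phi$), so the only loss is in approximating the set cover $OPT_{SC}(\cC_{C,X}^\phi)$. Here the key point is that every set in $\cC_{C,X}^\phi$ contains at most $\phi$ terminals, so the greedy algorithm — which at each step picks a minimum-density set, computable by Lemma \ref{L:minset} — achieves approximation ratio $H_\phi \le 1 + \ln \phi$ rather than the generic $1 + \ln N$. This is the standard greedy analysis (Johnson/Lov\'asz/Chv\'atal), where the harmonic-sum bound is taken only up to the maximum set size $\phi$; I would restate it with the density variant of Lemma \ref{L:cg-sc} so that edge costs of the subtrees are handled correctly. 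Combining, $c(T_C) \le c(T(r,C)) + (1+\ln\phi)\, OPT_{SC}(\cC_{C,X}^\phi) \le (1+\ln\phi)\, c(T_{Opt})$. Plugging $\phi = N^{1-\gamma}$ gives $1 + \ln \phi = 1 + (1-\gamma)\ln N \le (1-\gamma)\ln n + O(1)$ since $N \le n$, which (absorbing the additive constant, as is conventional) is the claimed $(1-\gamma)\ln n$-approximation.

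For the running time, I would note there are $n^{O(N/\phi)}$ candidate cores $S$; for each, the greedy loop runs at most $N$ iterations, each calling the min-density subroutine of Lemma \ref{L:minset} at cost $n^{O(\max(\phi, N/\phi))}$, and the final $T(r,S)$ computation costs $\poly(n)2^{O(N/\phi)}$. The total is $n^{O(\max(\phi, N/\phi))}$. With $\phi = N^{1-\gamma}$ and $\gamma \ge 1/2$ we have $\max(\phi, N/\phi) = \max(N^{1-\gamma}, N^\gamma) = N^\gamma$ (using $\gamma \ge 1-\gamma$), so the bound is $n^{O(N^\gamma)} = \exp(O(N^\gamma \log n)) = \exp(\tilde O(N^\gamma))$. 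The main obstacle I anticipate is not any single step but getting the greedy analysis to interact cleanly with the weighted, density-based cost accounting: one must verify that the greedy guarantee $c(\text{greedy}) \le H_\phi \cdot OPT_{SC}$ genuinely holds for the cost structure of $\cC_{C,X}^\phi$ (where each "set" is charged the cost of an optimal subtree containing it), and that the trees returned can be merged into a valid arborescence without cost blow-up — precisely what Lemma \ref{L:cg-sc} provides. The constraint $\gamma \ge 1/2$ enters only to make $N^\gamma$ dominate $N/\phi = N^\gamma$... wait, more precisely to make $\phi \le N/\phi$ so that $\max(\phi,N/\phi)=N^\gamma$; I would flag this explicitly.
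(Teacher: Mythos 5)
Your proposal matches the paper's own proof essentially step for step: you invoke Lemma~\ref{L:optcore} to split $c(T_{Opt})$ into $c(T(r,C)) + OPT_{SC}(\cC_{C,X}^\phi)$, apply the Chv\'atal/harmonic analysis of greedy weighted set cover to get the $1+\ln\phi$ factor because all sets in $\cC_{C,X}^\phi$ have size at most $\phi$, recombine via Lemma~\ref{L:cg-sc}, and account for the running time exactly as the paper does (you even correctly use $\max(\phi,N/\phi)$ where the paper's proof has a stray $\min$, presumably a typo, since Lemma~\ref{L:minset} states $\max$). The only difference is cosmetic: you explicitly note the $+O(1)$ slack from $1+\ln\phi$ versus $(1-\gamma)\ln n$ and absorb it, whereas the paper silently identifies $\ln\phi = (1-\gamma)\ln n$; both are fine.
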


\begin{proof}
Let $Gr$ be the size of the greedy set cover of $\cC_{C,X}^\phi$ and $O = OPT_{SC}(\cC_{C,X}^\phi)$.
Since the cardinality of the largest set in $\cC_{C,X}^\phi$ is at most $\phi$, it follows by the analysis of Chv\'atal \cite{chva} that $Gr \le (1+\ln \phi)Opt_{SC}(\cC_{C,X}^\phi)$. Thus, 
letting $t_c = c(T(r,C))$,
\[ c(T_{Alg}) \le t_C + Gr \le t_C + (1+\ln \phi)O \le (1+\ln \phi)(t_C+O) = (1+\ln \phi) c(T_{Opt}) \ . \]
applying Lemma \ref{L:cg-sc} in the first inequality and Lemma \ref{L:optcore} in the (final) equality.
Observe that $\ln \phi = (1-\gamma)\ln n$.
For each candidate core $S$ we find a min-density set at most $n$ times. There are $\binom{n}{N/\phi} \le n^{N/\phi}$ candidate cores and the cost for each is $n \cdot n^{O(\min(\phi,N/\phi))}$, by Lemma \ref{L:minset}. Hence, the total cost is $n^{N/\phi} \cdot n^{O(\max(\phi,N/\phi))} = n^{O(N/\phi)} = exp(\tilde{O}(N^\gamma))$ using that $\phi = N^{1-\gamma} \le N/\phi$. 
\end{proof}

Now we observe that the same theorem applies to the \textsc{Connected Polymatroid} problem.
Since the function is both submodular and {\em increasing}, for every collection of pairwise disjoint sets $\{S_i\}_{i=1^k}$, it holds that $\sum_{i=1}^kf(S_i)\geq f(\bigcup_{i=1}^k S_i)$.
Thus, for a given $\gamma \ge 1/2$, 
at iteration $i$ there exists a collection $S_i$ of terminals so that $f(S_i)/c(S_i)\geq f(U)/c(U)$. We can guess $S_i$ in time $\exp(N^\gamma\cdot \log n)$ and its set of Steiner vertices $X_i$ in time $O(3^{N^{\gamma}})$.
Using the algorithm of \cite{dw1}, we can find a tree of density at most $opt/N^{\gamma}$. The rest of the proof is identical.


\bibliographystyle{plain}
\bibliography{u1}

\begin{thebibliography}{10}

\bibitem{bans}
Nikhil Bansal, Parinya Chalermsook, Bundit Laekhanukit, Danupon Nanongkai, and
  Jesper Nederlof.
\newblock New tools and connections for exponential-time approximation.
\newblock {\em CoRR}, abs/1708.03515, 2017.

\bibitem{ethm}
Chris Calabro, Russell Impagliazzo, and Ramamohan Paturi.
\newblock A duality between clause width and clause density for {SAT}.
\newblock In {\em CCC}, pages 252--260, 2006.

\bibitem{focs13}
P.~Chalermsook, B.~Laekhanukit, and D.~Nanongkai.
\newblock Independent set, induced matching, and pricing: Connections and tight
  (subexponential time) approximation hardnesses.
\newblock In {\em FOCS}, 2013.

\bibitem{moses1}
Moses Charikar, Chandra Chekuri, To{-}Yat Cheung, Zuo Dai, Ashish Goel, Sudipto
  Guha, and Ming Li.
\newblock Approximation algorithms for directed {Steiner} problems.
\newblock {\em J. Algorithms}, 33(1):73--91, 1999.

\bibitem{chva}
Vasek Chv{\'{a}}tal.
\newblock A greedy heuristic for the set-covering problem.
\newblock {\em Math. Oper. Res.}, 4(3):233--235, 1979.

\bibitem{CKL18}
Marek Cygan, Guy Kortsarz, and Bundit Laekhanukit.
\newblock On subexponential running times for approximating directed steiner
  tree and related problems.
\newblock {\em CoRR}, abs/1811.00710, 2018.

\bibitem{Marek}
Marek Cygan, Lukasz Kowalik, and Mateusz Wykurz.
\newblock Exponential-time approximation of weighted set cover.
\newblock {\em Inf. Process. Lett.}, 109(16):957--961, 2009.

\bibitem{DinurSteurer}
Irit Dinur and David Steurer.
\newblock Analytical approach to parallel repetition.
\newblock {\em CoRR}, abs/1305.1979, 2013.

\bibitem{dw1}
S.~E. Dreyfus and R.~A. Wagner.
\newblock The {Steiner} problem in graphs.
\newblock {\em Networks}, 1(3):195--207, 1971.

\bibitem{Feige}
Uri Feige.
\newblock A threshold of $\ln n$ for approximating set cover.
\newblock {\em Journal of the ACM}, 45(4):634--652, 1998.

\bibitem{GN20}
Rohan Ghuge and Viswanath Nagarajan.
\newblock Quasi-polynomial algorithms for submodular tree orienteering and
  other directed network design problems.
\newblock In {\em SODA}, pages 1039--1048. SIAM, 2020.

\bibitem{GLL19}
Fabrizio Grandoni, Bundit Laekhanukit, and Shi Li.
\newblock {$O(\log^2 k/\log \log k)$}-approximation algorithm for directed
  {Steiner} tree: {A} tight quasi-polynomial-time algorithm.
\newblock In {\em STOC}, pages 253--264, 2019.

\bibitem{eran1}
E.~Halperin and R.~Krauthgamer.
\newblock Polylogarithmic inapproximability.
\newblock In {\em STOC}, pages 585--594, 2003.

\bibitem{eth}
Russell Impagliazzo, Ramamohan Paturi, and Francis Zane.
\newblock Which problems have strongly exponential complexity?
\newblock In {\em FOCS}, pages 653--663, 1998.

\bibitem{dj}
David~S. Johnson.
\newblock Approximation algorithms for combinatorial problems.
\newblock {\em J. Comput. Syst. Sci.}, 9(3):256--278, 1974.

\bibitem{slt}
G.~Kortsarz and D.~Peleg.
\newblock Approximating the weight of shallow {Steiner} trees.
\newblock {\em Discrete Applied Math}, pages 265--285, 1999.

\bibitem{L}
L.~Lov\'asz.
\newblock On the ratio of optimal integral and fractional covers.
\newblock {\em Discrete Mathematics}, pages 383--390, 1975.

\bibitem{LLYY}
Carsten Lund and Mihalis Yannakakis.
\newblock On the hardness of approximating minimization problems.
\newblock {\em J. ACM}, 41(5):960--981, 1994.

\bibitem{Moshkovitz15}
Dana Moshkovitz.
\newblock The projection games conjecture and the {NP}-hardness of $\ln
  n$-approximating set-cover.
\newblock {\em Theory of Computing}, 11(7):221--235, 2015.

\bibitem{r2}
Dana Moshkovitz and Ran Raz.
\newblock Sub-constant error probabilistically checkable proof of almost-linear
  size.
\newblock {\em Computational Complexity}, 19(3):367--422, 2010.

\bibitem{Muk19}
Priyanka Mukhopadhyay.
\newblock The projection games conjecture and the hardness of approximation of
  {SSAT} and related problems.
\newblock {\em CoRR}, abs/1907.05548, 2019.

\bibitem{arvind}
Moni Naor, Leonard~J. Schulman, and Aravind Srinivasan.
\newblock Splitters and near-optimal derandomization.
\newblock In {\em FOCS}, pages 182--191, 1995.

\bibitem{Slavik96}
Petr Slav{\'{\i}}k.
\newblock A tight analysis of the greedy algorithm for set cover.
\newblock In {\em STOC}, pages 435--441, 1996.

\bibitem{W}
L.~A. Wolsey.
\newblock An analysis of the greedy algorithm for the submodular set covering
  problem.
\newblock {\em Combinatorica}, 2:385--393, 1982.

\end{thebibliography}


\end{document}